\PassOptionsToPackage{unicode,final}{hyperref}
\PassOptionsToPackage{hyphens}{url}
\documentclass[pdflatex,sn-mathphys-num]{sn-jnl}%

\usepackage{amsmath,amssymb}
 \usepackage{amsthm}

\newtheorem{thm}{Theorem}

\newtheorem{cor}[thm]{Corollary}
\makeatletter
\def\fps@figure{htbp}
\makeatother

\usepackage{graphicx}
\usepackage{booktabs}
\usepackage{float}
\usepackage{array}
\newcommand{\figlink}[2]{\hyperref[#1]{\ref*{#1}#2}}

\graphicspath{{fig/}{src/fig/}{../open_repo/agent_verify/results_si/exp1_eigsel/}{../open_repo/agent_verify/results_si/exp2_dprime/}{../open_repo/agent_verify/results_si/exp3_grid/}{open_repo/agent_verify/results_si/exp1_eigsel/}{open_repo/agent_verify/results_si/exp2_dprime/}{open_repo/agent_verify/results_si/exp3_grid/}}

\begin{document}

\title{
        Inducing fast-slow separation enables robust learning of complex dynamics
}
\vspace{1cm}

\author*[1]{\fnm{Satoshi} \sur{Oishi}}
\author[2,1]{\fnm{Hiroshi} \sur{Yamashita}}
\author[1,2]{\fnm{Hideyuki} \sur{Suzuki}}
\author[1,2]{\fnm{Sho} \sur{Shirasaka}}

\affil*[1]{\orgdiv{Graduate School of Information Science and Technology}, \orgname{\\The University of Osaka}, \orgaddress{\street{1-5 Yamadaoka}, \city{Suita}, \postcode{565-0871}, \state{Osaka}, \country{Japan}}}
\affil*[2]{\orgdiv{International Research Center for Neurointelligence}, \orgname{\\The University of Tokyo}, \orgaddress{\street{7-3-1 Hongo}, \city{Bunkyo-ku}, \postcode{113-0033}, \state{Tokyo}, \country{Japan}}}
\date{\today}

\abstract{
Many real-world systems share a fast-slow structure: most degrees of freedom relax quickly, while a few slow modes govern the long-term evolution. Their dynamics often collapse onto low-dimensional complex structures such as chaotic attractors, and a goal of nonlinear science is to predict and faithfully reproduce them from observed time series. Machine-learning models and data-driven approaches can embed a chaotic attractor in high-dimensional spaces, but embedding alone does not guarantee faithful reproduction. Training can create spurious slow modes, excess modes unnecessary for the target dynamics, which destabilize the reconstruction. To prevent this instability, we introduce input-layer designs for reservoir computing, a framework suited to physical implementation. Through restriction of network controllability, the designs limit the number of slow modes available to learning and anchor the remaining modes to stay fast in advance, even for a black-box model. The reconstruction is thereby confined to a transversally attracting subspace, and spurious slow modes are suppressed. Across diverse chaotic systems, the designs robustly reproduce the attractors and their dynamical invariants, extend the horizon of accurate prediction, and withstand perturbations of the internal weights, without extensive tuning. Inducing fast-slow separation and reconstructing attractors in attracting subspaces offers a design principle for reliable data-driven modeling.
}

\maketitle

\newpage

\section{Main}\label{main}

Many complex systems throughout nature share a fast-slow structure. Most of their degrees of freedom relax quickly, while a few slow modes govern the long-term evolution. The dynamics therefore collapse onto an emergent low-dimensional structure, as in the atmosphere \cite{Ghil2020-jw} and neural populations \cite{Vyas2020-ie}.
However, the asymptotic dynamics can be highly complex, a behavior known as chaos. A chaotic attractor draws every nearby trajectory onto itself, yet the motion on it is unpredictable because tiny differences in the initial state grow exponentially.
A long-standing goal of data-driven science is to predict and faithfully reproduce such chaotic dynamics from observed time series alone.

A standard route to modeling such systems from data is attractor reconstruction: lifting the observed time series into a higher-dimensional space (e.g., by delay-coordinate embedding \cite{Takens1981-mw, Sauer1991-ew}) recovers a set topologically equivalent to the original attractor. From the reconstructed set, invariant properties of the dynamics can be estimated and its future evolution predicted. Machine-learning models can also reconstruct the attractor, by mapping the input data into their internal representation spaces.
Recurrent neural networks (RNNs), for instance, can recover the topology of the attractor within their internal representation, but when the target is chaotic, their training suffers from difficulties rooted in the sensitivity to initial conditions, most notably the exploding-gradient problem \cite{Mikhaeil2021-ve}.
Reservoir computing (RC) \cite{Jaeger2001-vs, Maass2002-me} circumvents these difficulties. It employs a fixed, randomly generated internal network (the reservoir) to project inputs into a high-dimensional state space, and trains only the readout layer via linear regression, so that no gradient propagates through the recurrent dynamics (Fig.~\figlink{fig:fig1}{a}). This simplicity also allows the reservoir to be realized directly in physical substrates---photonic, spintronic, and quantum \cite{Tanaka2019-vn}---where the internal network is fixed by the hardware rather than designed. RC has proven effective in modeling chaotic time series \cite{Pathak2017-id, Lu2018-vk}, a success attributed to the attractor being reconstructed within the reservoir state space \cite{Hart2020-kx, Grigoryeva2021-mg}. The geometry of the target is thereby preserved, enabling both short-term prediction and the reproduction of its invariant properties \cite{Pathak2017-id, Margazoglou2023-lc, Sisodia2024-fl}.

However, the topological/geometrical reconstruction of the attractor alone does not guarantee the faithful reproduction of the dynamical system, especially during closed-loop (autonomous) operation (Fig.~\figlink{fig:fig1}{a}).
This limitation is not specific to RC but applies to any approach that predicts recursively \cite{Bi2023-jr, Kochkov2024-ak}. Such a model can stray from the reconstructed attractor, because training does not control the transverse stability of the reconstruction within the representation space. When the reconstruction is transversally unstable, trajectories diverge from it, leading to behavior entirely different from that of the original dynamical system (Fig.~\figlink{fig:fig1}{d}). Stabilizing the reconstructed attractor is therefore the central requirement, yet no method guarantees it in advance. In practice, RC thus often requires trial-and-error tuning \cite{Platt2023-lh}.

Here, we propose a design principle that offers an \textit{a priori} solution to stabilize this reconstruction.
We observed that such divergence is caused by \emph{spurious slow modes}: training perturbs the time scales of the network indiscriminately, creating excess slow modes that are unnecessary for reproducing the target dynamics (Fig.~\figlink{fig:fig1}{c,d}).
To suppress these modes, we developed an input-layer design that restricts the number of time scales available to learning and anchors all the remaining eigenvalues at fast time scales, thereby bounding the controllability of the network (Fig.~\figlink{fig:fig1}{e}).
This induces a fast-slow separation in the closed-loop dynamics and places the reconstruction near a low-dimensional, transversally attracting subspace designed before learning (Fig.~\figlink{fig:fig1}{b,f}).
We call this design goal \emph{attractor reconstruction in attracting subspaces} (ARAS).
The principle is not specific to RC and should benefit other data-driven models of dynamics.

In this work, we demonstrate the effectiveness of this principle in RC, a particularly simple architecture.
Across a wide range of chaotic systems, the design delivers long and robust predictions and reproduces the invariant properties of the target dynamics.
We further extended the design to reservoirs that are black boxes, as in physical implementations of RC---the \emph{probe-based construction}---with no loss of stability or accuracy.
We confirmed that the resulting models are robust not only to observational and internal noise but also to small changes in the internal coupling weights.

\begin{figure}[H]
    \centering
    \includegraphics[width=0.99\linewidth,height=\textheight,keepaspectratio]{figures/fig1.pdf}
    \caption{\textbf{a}, Schematic of reservoir computing (RC). During training (open loop), the RC is driven by the observed time series \(\mathbf{u}_t\) and only the readout \(\mathbf{W}_{\mathrm{out}}\) is learned. For autonomous prediction, the RC feeds its own one-step-ahead prediction back as the input, closing the loop into the autonomous dynamics governed by \(\mathbf{W}_{\mathrm{cl}}=\mathbf{A}+\mathbf{W}_{\mathrm{in}}\mathbf{W}_{\mathrm{out}}\).
    \textbf{b}, The design goal of this study. The reconstructed data are placed near a low-dimensional attracting slow submanifold of the representation space, and the fast contraction transverse to it renders the reconstruction transversally stable.
    \textbf{c},\textbf{d}, Instability in the conventional (random) input-layer design.
    \textbf{c}, Eigenvalue spectrum of the closed-loop matrix \(\mathbf{W}_{\mathrm{cl}}\), showing the uncontrolled emergence of spurious slow modes: eigenvalues escape beyond the spectral radius of \(\mathbf{A}\) (green dashed circle).
    \textbf{d}, Long-term autonomous prediction trajectory, which diverges from the true attractor and exhibits spurious behavior unfaithful to the target dynamics.
    \textbf{e},\textbf{f}, The proposed input-layer design.
    \textbf{e}, Eigenvalue spectrum of \(\mathbf{W}_{\mathrm{cl}}\). By design, only the three eigenvalues corresponding to the learnable modes are slow, while all remaining eigenvalues stay fast and within the stable region defined by the spectral radius of \(\mathbf{A}\), suppressing spurious slow modes.
    \textbf{f}, Long-term autonomous prediction trajectory, accurately and stably reproducing the target attractor.}\label{fig:fig1}
\end{figure}

\section{Results}\label{results}

\subsection{Suppression of spurious slow modes stabilizes attractor reconstruction}\label{suppression-of-spurious-slow-modes-stabilizes-attractor-reconstruction}

Following the standard RC procedure for predicting and reproducing dynamical systems, we feed the observed time series \(\mathbf{u}_t \in \mathbb{R}^{D}\) of the target system into the reservoir (open loop; Fig.~\figlink{fig:fig1}{a}):
\begin{equation}\label{eq:open_loop}
    \mathbf{r}_{t+1} = \boldsymbol{\tau}(\mathbf{A} \mathbf{r}_t + \mathbf{W}_{\mathrm{in}} \mathbf{u}_t + \mathbf{b}),
\end{equation}
where \(\mathbf{r}_t \in \mathbb{R}^{N}\) is the reservoir state, the transition matrix \(\mathbf{A}\), the input weights \(\mathbf{W}_{\mathrm{in}}\), and the bias \(\mathbf{b}\) are fixed, and \(\boldsymbol{\tau} = \boldsymbol{\tanh}\) acts as the element-wise activation function (Methods~\ref{reservoir-computing-fundamentals-and-closed-loop-operation}).
Training determines the readout \(\mathbf{W}_{\mathrm{out}}\) by linear regression.
Feeding the prediction \(\hat{\mathbf{u}}_t = \mathbf{W}_{\mathrm{out}}\mathbf{r}_t\) back as the input then closes the loop:
\begin{equation}\label{eq:closed_loop_main}
    \mathbf{r}_{t+1} = \boldsymbol{\tau}(\mathbf{W}_{\mathrm{cl}}\mathbf{r}_t + \mathbf{b}), \qquad \mathbf{W}_{\mathrm{cl}} := \mathbf{A} + \mathbf{W}_{\mathrm{in}}\mathbf{W}_{\mathrm{out}}.
\end{equation}
The closed-loop RC is an autonomous dynamical system whose time scales depend on the spectrum of the closed-loop matrix \(\mathbf{W}_{\mathrm{cl}}\).

Before training, all eigenvalues of \(\mathbf{A}\) lie within the small spectral radius \(\rho < 1\) set by design, so every mode of the open-loop reservoir is fast and contracting.
Training the readout adds the perturbation \(\mathbf{W}_{\mathrm{in}}\mathbf{W}_{\mathrm{out}}\) to \(\mathbf{A}\) and moves the eigenvalues.
In standard RC, where \(\mathbf{A}\) and \(\mathbf{W}_{\mathrm{in}}\) are both random, nothing constrains this movement.
For a standard RC trained on the Lorenz system \cite{Lorenz1963-kf} (\(D=3\), \(N=200\)), 23 eigenvalues of \(\mathbf{W}_{\mathrm{cl}}\) lie outside the spectral radius of \(\mathbf{A}\) in the realization shown (green dashed circle, Fig.~\figlink{fig:bs}{a}), although reproducing the three-dimensional target dynamics should require only three essential modes.
These excess slow eigenvalues give rise to the spurious slow modes (Fig.~\figlink{fig:fig1}{d}).
Both proposed constructions instead restrict this movement \textit{a priori}: the input layer is built on a \(D'\)-dimensional invariant subspace \(V_{D'}\) of the reservoir's linearized dynamics, and exactly \(D'=3\) eigenvalues leave the spectrum of \(\mathbf{A}\) while the remaining \(N-D'\) are preserved (Figs.~\figlink{fig:bs}{b,c})---exactly for the analytic construction (Theorem~\ref{thm:invariant_eigenvalues}; Methods~\ref{deterministic-input-layer-construction-proposed-method}) and approximately for the probe-based one ; Methods~\ref{probe-based-construction}.

The proposed constructions also aim to reconstruct the attractor within a transversally attracting subspace of the reservoir state space (Fig.~\figlink{fig:fig1}{b}).
By design, \(V_{D'}\) remains invariant under the closed-loop matrix \(\mathbf{W}_{\mathrm{cl}}\), and the \(D'\) moved eigenvalues are those associated with it (Corollary~\ref{cor:inheritance}).
In the directions transverse to \(V_{D'}\), the linearized closed loop contracts at the fast rates of the \(N-D'\) anchored eigenvalues, so \(V_{D'}\) remains attracting even after training.
It is thus desirable that the attractor be reconstructed within this attracting subspace, and the proposed constructions indeed achieve this (Figs.~\figlink{fig:bs}{e,f}).
Measuring the distance from each state of the reconstruction to its input subspace \(\mathbf{r}^{*}+\mathrm{Im}(\mathbf{W}_{\mathrm{in}})\), where \(\mathbf{r}^{*}\) is the input-free equilibrium of the reservoir, confirms this placement: both proposed constructions hold the reconstruction within a thin neighborhood of it (mean distance \(\approx 0.003\)--\(0.006\); Figs.~\figlink{fig:bs}{e,f}), whereas the random reconstruction is delocalized (\(\approx 0.25\), roughly forty times larger; Fig.~\figlink{fig:bs}{d}, and see also Extended Data Fig.~\ref{fig:ed-sweep-dist}).
The proposed constructions attain this confinement because their \(\mathrm{Im}(\mathbf{W}_{\mathrm{in}})\) is \(\mathbf{A}\)-invariant, so in the open loop (Eq.~\eqref{eq:open_loop}) both the drive and the memory of past inputs keep the state inside this subspace, and only the nonlinearity leaks.
For the random construction, by contrast, the memory itself carries the state away from its input subspace.

In the nonlinear closed-loop RC (Eq.~\eqref{eq:closed_loop_main}), we now examine to what extent the attractor reconstructed near this subspace is actually attracting. We quantify this by basin stability \cite{Menck2013-ex} (\(S_B\)): the fraction of randomly perturbed reservoir states whose closed-loop predictions converge back to the true attractor (Methods~\ref{methods-basin}).
One realization of the random construction loses nearly a third of its basin (\(S_B=0.71\); Fig.~\figlink{fig:bs}{g}), while both proposed constructions retain it entirely (\(S_B=1.00\); Figs.~\figlink{fig:bs}{h,i}).
Whether the random construction retains its basin depends on the realization (Extended Data Fig.~\ref{fig:ed-sweep-basin}), whereas every realization of the proposed constructions keeps the basin nearly full (\(S_B\ge0.97\)).

Through this input-layer design, the reconstructed attractor is not indiscriminately embedded in a high-dimensional state space but placed near a subspace that remains transversally attracting under the closed-loop dynamics.
This design principle is what we call \emph{attractor reconstruction in attracting subspaces} (ARAS).

\begin{figure}[H]
    \centering
    \includegraphics[width=0.99\linewidth,height=\textheight,keepaspectratio]{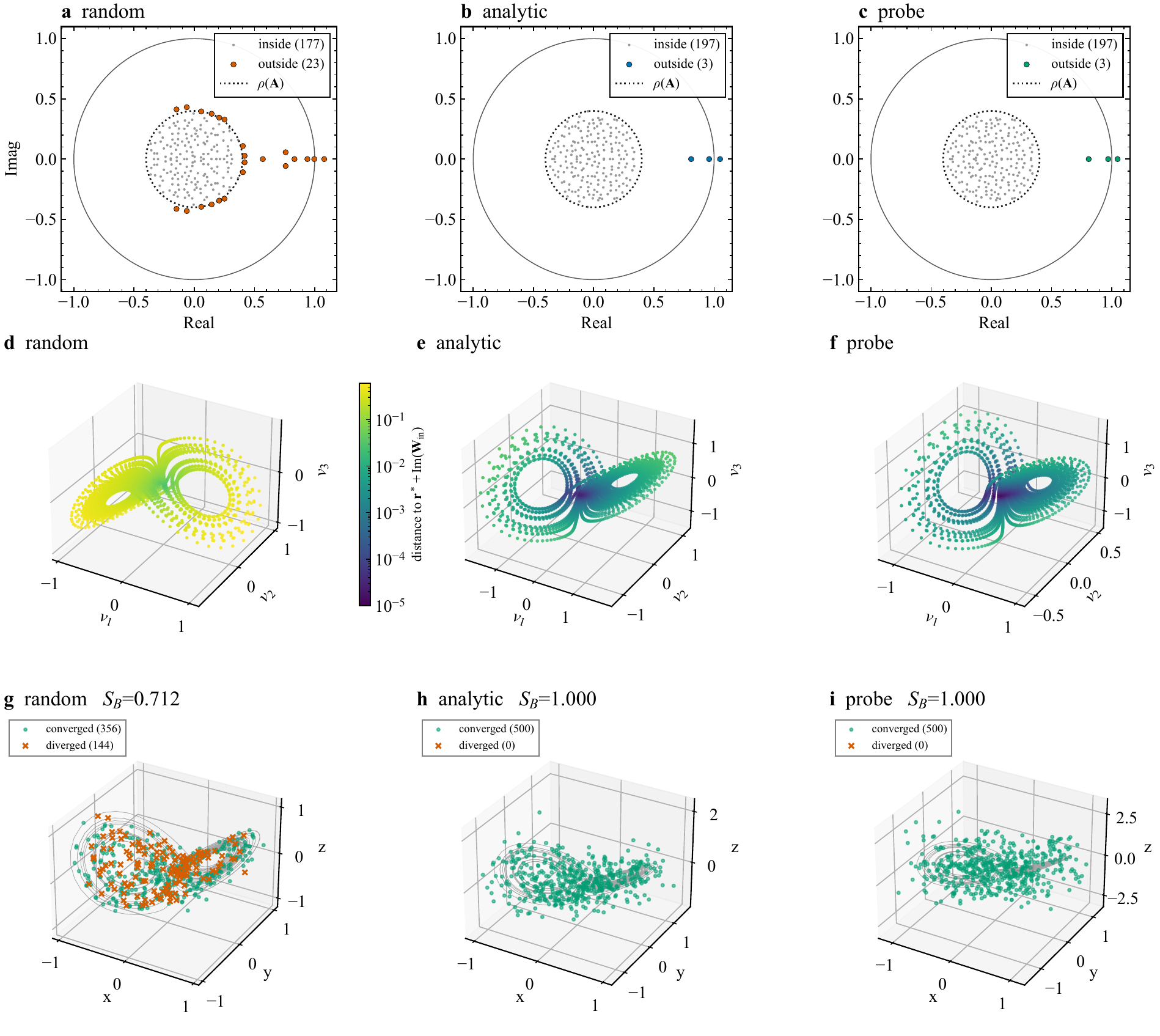}
    \caption{
    \textbf{Reconstruction geometry and nonlinear stability for the three input-layer constructions} (random, analytic, and probe-based) on the Lorenz system.
    \textbf{a}--\textbf{c}, Eigenvalue spectra of the closed-loop matrix \(\mathbf{W}_{\mathrm{cl}}\) in the complex plane; the green dashed circle marks the spectral radius \(\rho\) of \(\mathbf{A}\). The random construction (\textbf{a}) produces spurious slow modes that escape beyond \(\rho\), whereas the analytic (\textbf{b}) and probe-based (\textbf{c}) constructions move exactly \(D'=3\) eigenvalues outward while anchoring the remaining \(N-D'\) inside \(\rho\), exactly for the analytic construction (Theorem~\ref{thm:invariant_eigenvalues}) and approximately for the probe-based one.
    \textbf{d}--\textbf{f}, Reconstructed attractor in the reservoir state space, projected onto the affine subspace \(\mathbf{r}^{*}+\mathrm{Im}(\mathbf{W}_{\mathrm{in}})\) (the designed input subspace anchored at the input-free equilibrium \(\mathbf{r}^{*}\)) and colored by the distance of each state to that plane (logarithmic scale). The random reconstruction (\textbf{d}) lies far from its input subspace (mean \(\approx 0.25\)), while both proposed constructions (\textbf{e},\textbf{f}) place the attractor within a thin neighborhood of it (mean \(\approx 0.003\)--\(0.006\)).
    \textbf{g}--\textbf{i}, Basin-stability analysis (nonlinear dynamic stability): 500 reservoir states, perturbed by uniform noise, are mapped to the observation space (\(\hat{\mathbf{u}}_{t_0}=\mathbf{W}_{\mathrm{out}}\mathbf{r}_{t_0}\)). initial points whose closed-loop trajectories returned to the true attractor are marked with green dots, those that did not with orange crosses, and \(S_B\) is the converged fraction (random \(0.71\); analytic and probe-based \(1.00\)).}
    \label{fig:bs}
\end{figure}

\subsection{Transverse contraction dominates along the reconstruction}\label{normal-hyperbolicity}

We now turn to a further property of this reconstruction: it is \emph{normally attracting} \cite{Fenichel1979-ej, Hirsch1977-ry}.
This property makes the stability of the reconstruction robust not only to transverse perturbations of the internal state, as quantified by basin stability, but also to perturbations of the model itself, such as training error and small changes in the internal weights.
Roughly speaking, an invariant set of a dynamical system is normally attracting if the dynamics contracts sufficiently faster in the directions transverse to the set than in the tangential directions (Methods~\ref{methods-lyapunov}).
When this holds, the attracting structure is guaranteed to persist under sufficiently small \(C^{1}\) perturbations of the map \cite{Fenichel1979-ej, Hirsch1977-ry}; in particular, the reconstruction remains stable under small changes in the internal parameters.

To verify this property, we measured both rates directly along the driven reservoir trajectory over short stretches of \(k\) steps: the transverse strongest growth rate (weakest contraction) \(\Lambda_{k}^{\perp}\) and the strongest tangential contraction \(\Lambda_{k}^{T,\min}\) (Methods~\ref{methods-lyapunov}).
The reconstruction is normally attracting when two conditions hold at every point: their difference, the domination metric \(\Delta_{k}=\Lambda_{k}^{\perp}-\Lambda_{k}^{T,\min}\), is negative, and the transverse contraction itself is strong (\(\Lambda_{k}^{\perp}\ll0\)).
For both proposed constructions \(\Delta_{16}\) is negative at \emph{every} sampled point of the attractor (window \(k=16\), about a fifth of a Lyapunov time), with a margin of \(0.32\)--\(0.33\) per step (Figs.~\figlink{fig:nh}{a--c}).
The transverse contraction is also strong: \(\Lambda_{k}^{\perp}\) is negative at every sampled point from window \(k=8\) on and deepens with \(k\), to \(\Lambda_{16}^{\perp}\le-0.62\)---transverse deviations shrink by nearly a factor of two per step (Figs.~\figlink{fig:nh}{d--f}, see also Extended Data Fig.~\ref{fig:ed-sweep-dom}).
For the random construction neither condition holds anywhere: \(\Delta_{16}\) is negative nowhere, and \(\Lambda_{k}^{\perp}\) remains positive at every point at all windows (Figs.~\figlink{fig:nh}{a,d}).
The measurement therefore confirms, point by point, that the proposed constructions achieve the normally attracting structure intended by the anchoring.

Normal attraction predicts that, under small changes in the internal weights, the reconstruction persists as a nearby invariant set.
We tested this prediction by perturbing the closed-loop matrix to \(\mathbf{W}_{\mathrm{cl}}+\mathbf{W}_{\varepsilon}\), with \(\mathbf{W}_{\varepsilon}\) a random matrix of spectral norm \(\varepsilon\), and running the perturbed closed loop autonomously (Methods~\ref{methods-persistence}).
At \(\varepsilon=1\times10^{-4}\), the analytic and probe-based reconstructions retain the attractor in \(93\%\) and \(87\%\) of runs, whereas the random reconstruction loses it in \(74\%\) (Extended Data Fig.~\ref{fig:ed-eps-bars}).
Figures~\figlink{fig:nh}{g--i} show a representative example, the most frequent joint outcome.

\begin{figure}[H]
    \centering
    \includegraphics[width=0.99\linewidth,height=\textheight,keepaspectratio]{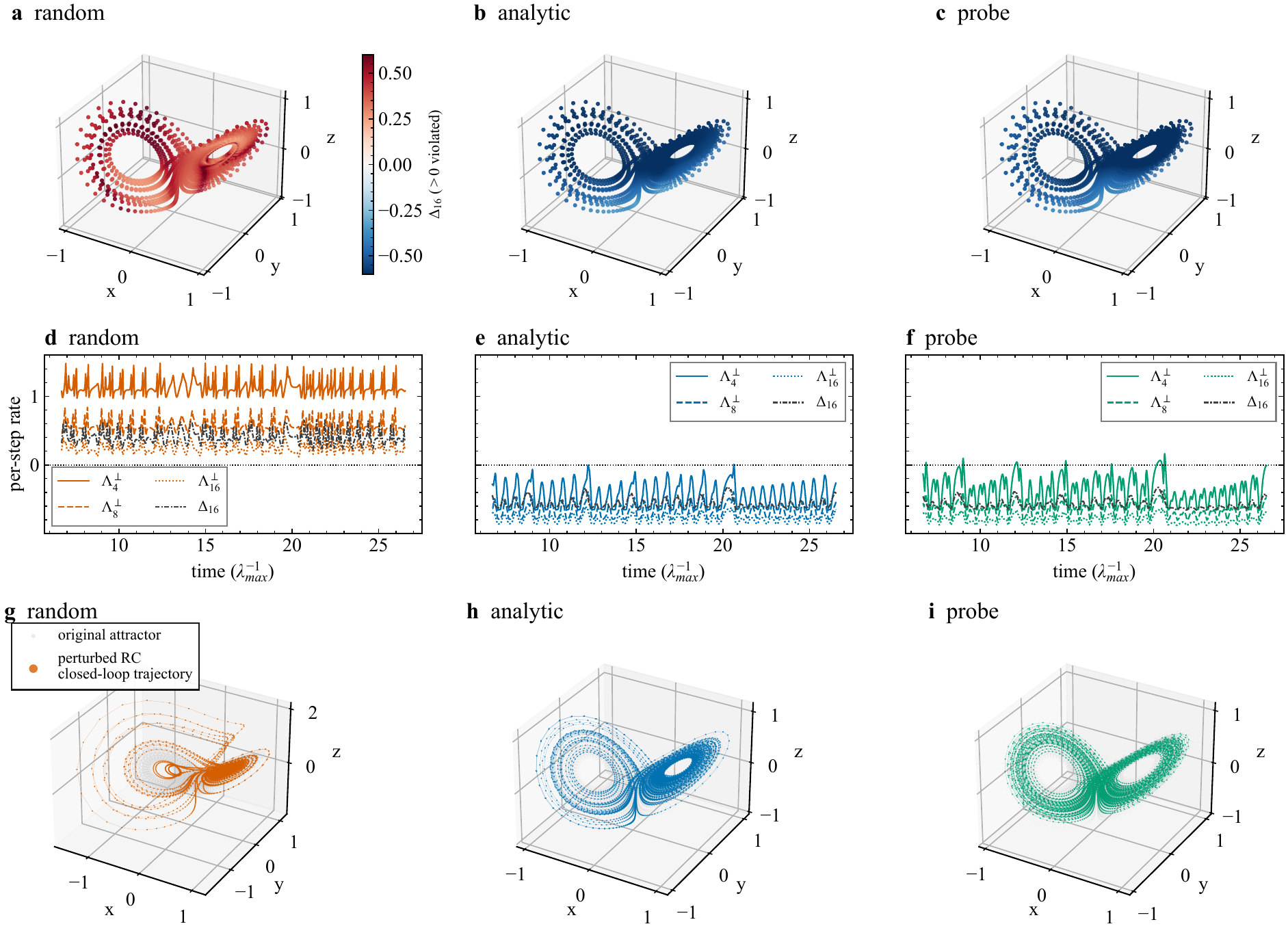}
    \caption{\textbf{Transverse contraction dominates along the reconstruction} for the three constructions (Lorenz, \(N=200\); \(\rho=0.4\)).
    \textbf{a}--\textbf{c}, The reconstructed attractor in the observation space, each point colored by the \(k=16\) domination metric \(\Delta_{16}=\Lambda_{16}^{\perp}-\Lambda_{16}^{T,\min}\) (color scale centered at zero). Negative (blue) means the transverse contraction dominates the tangential dynamics at each point; positive (red) means domination is violated. The random construction (\textbf{a}) is non-dominated everywhere, the analytic (\textbf{b}) and probe-based (\textbf{c}) constructions dominated everywhere.
    \textbf{d}--\textbf{f}, Windowed transverse growth rates \(\Lambda_k^{\perp}\) for \(k=4,8,16\) and the \(k=16\) domination metric \(\Delta_{16}\) (dashed) along the \(2{,}000\)-step driven trajectory (first \(500\) steps discarded as transient; Methods~\ref{methods-lyapunov}), for the random (\textbf{d}), analytic (\textbf{e}), and probe-based (\textbf{f}) constructions. For the proposed constructions the rates are negative at every point from \(k=8\) on and deepen with increasing window; for the random construction they remain positive at every point at all windows.
    \textbf{g}--\textbf{i}, Persistence under a perturbation of the internal weights. The closed loop is run autonomously after perturbing the closed-loop matrix to \(\mathbf{W}_{\mathrm{cl}}+\mathbf{W}_{\varepsilon}\), where \(\mathbf{W}_{\varepsilon}\) is a random matrix of spectral norm \(\varepsilon=1\times10^{-4}\). Shown is one reservoir realization with one draw of \(\mathbf{W}_{\varepsilon}\), plotted for \(5{,}000\) steps (\(\approx 67\,\lambda_{1}^{-1}\)). The random reconstruction (\textbf{g}) makes repeated excursions far outside the attractor, whereas the analytic (\textbf{h}) and probe-based (\textbf{i}) reconstructions remain within a small neighborhood of it. Over 20 reservoir realizations and 10 draws of \(\mathbf{W}_{\varepsilon}\), the analytic and probe-based reconstructions retain the attractor in \(93\%\) and \(87\%\) of runs at this \(\varepsilon\), and the random one retains it in only \(26\%\) (see also Extended Data Figs.~\ref{fig:ed-eps-bars} and \ref{fig:ed-eps-ladder}).}
    \label{fig:nh}
\end{figure}

\subsection{Predicting and reproducing chaos}\label{anchoring-improves-reproduction-of-chaotic-dynamics}

Anchoring the fast modes deliberately restricts the number of learnable time scales in the closed-loop dynamics. A natural concern is that this constraint costs the ability to reproduce the target dynamics.
We therefore compared closed-loop prediction across the three constructions on the Lorenz system \cite{Lorenz1963-kf} (\(D=3\)) over 20 reservoir realizations, using the valid prediction time (VPT)~\cite{Vlachas2020-ob}---the duration for which the prediction error stays below a fixed threshold, in units of the Lyapunov time \(\lambda_{1}^{-1}\) (Methods~\ref{methods-metrics}). The random construction attains a median VPT of \(12.4\,\lambda_{1}^{-1}\), but erratically: individual realizations range from \(2.4\) to \(17.0\). Both proposed constructions are more accurate and far more consistent---analytic \(17.2\) (range \(16.7\)--\(17.8\)), probe-based \(17.6\) (range \(16.9\)--\(18.0\))---with every realization at \(16.7\,\lambda_{1}^{-1}\) or above (Figs.~\figlink{fig:vptls}{a--c}). These values are competitive with those reported for hyperparameter-tuned RCs \cite{Pathak2017-id, Platt2021-el, Bollt2021-mj, Lu2018-vk, Hurley2025-ue}, yet the proposed constructions require no extensive tuning, maintaining high performance across a broad range of hyperparameters such as spectral radii of \(\mathbf{A}\), reservoir sizes, and reservoir topologies (Extended Data Fig.~\ref{fig:ed-sweep-vpt} and Supplementary Secs.~\ref{si:grid} and \ref{si-radius-sweep}).

Short-term accuracy, however, does not imply fidelity to the target system.
To assess the reproduction of the system's long-term behavior, we turn to the Lyapunov spectrum---the time-averaged rates at which the dynamics expands or contracts nearby trajectories along the attractor.
The Lorenz system has one positive, one zero, and one negative exponent (\(\lambda_{1},\lambda_2,\lambda_3=0.91,0,-14.57\)). A closed loop \eqref{eq:closed_loop_main} that reproduces these values is faithful to the target in this sense.
We therefore estimated the leading exponents of each trained closed loop (Figs.~\figlink{fig:vptls}{d--f}).
Here the random construction fails on almost every realization, including those that predict well.
Even when the positive and zero exponents are captured, spurious exponents are interposed, and the negative exponent is almost never recovered.
With such spurious exponents present, closed-loop trajectories can diverge from the reconstruction and behave entirely differently from the target system (Fig.~\figlink{fig:fig1}{d}).
Both proposed constructions instead recover the entire Lyapunov spectrum, including the negative exponent, on every realization, and no spurious exponents are interposed. This reflects the mechanism verified in the previous section: the anchored \(N-D'\) fast eigenvalues make trajectories decay toward the learned slow subspace, so the reconstruction remains near the strongly attracting slow manifold and inherits the true relaxation rates---the negative Lyapunov exponents---toward the attractor. The data-driven recovery of negative Lyapunov exponents has long been considered difficult \cite{Sauer1998-la}. Our design overcomes this difficulty, enabling reliable estimation of the fractal dimension of the attractor (the Kaplan--Yorke dimension) \cite{Kaplan1979-jy}.

\begin{figure}[H]
    \centering
    \includegraphics[width=0.99\linewidth,height=\textheight,keepaspectratio]{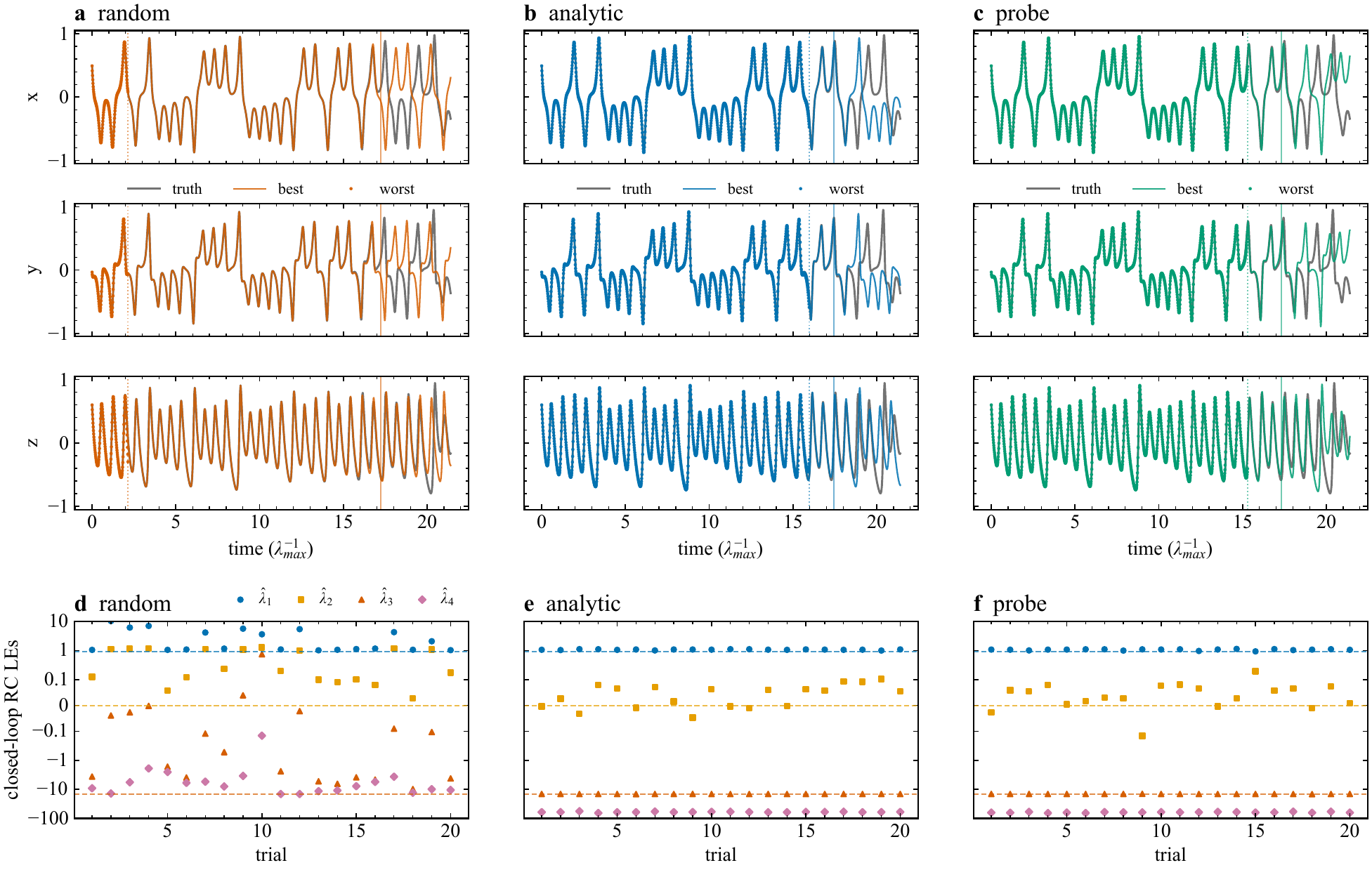}
    \caption{\textbf{Closed-loop prediction and Lyapunov spectrum estimation} (Lorenz system).
        \textbf{a}--\textbf{c}, Autonomous closed-loop predictions of the three observed components \((x,y,z)\) against the ground truth (gray) for the random (\textbf{a}), analytic (\textbf{b}), and probe-based (\textbf{c}) constructions. For each construction the best realization (dashed, full length) and the worst realization (dotted, drawn up to its VPT) are shown; vertical lines mark the VPT of the worst (dotted) and best (dashed) realizations.
        \textbf{d}--\textbf{f}, Estimated Lyapunov exponents \(\hat\lambda_1\)--\(\hat\lambda_4\) for each reservoir realization for the random (\textbf{d}), analytic (\textbf{e}), and probe-based (\textbf{f}) constructions (symmetric log scale, linear within \(\pm0.1\)); dashed lines are the true values \(\lambda_{1,2,3}=0.91,0,-14.57\). The random construction inflates the leading exponent and almost never recovers the negative one. Both proposed constructions recover the qualitative structure of the spectrum, including the negative exponent, on every realization.}
    \label{fig:vptls}
\end{figure}

To test whether an RC designed by the proposed method remains effective beyond the Lorenz system, we evaluated its performance on \texttt{dysts}, a benchmark dataset of diverse low-dimensional chaotic dynamical systems \cite{Gilpin2021-wp, Gilpin2023-ua} (Methods~\ref{benchmark-evaluation-details}).
We evaluated prediction with the VPT and with the benchmark's primary metric, the symmetric mean absolute percentage error (SMAPE, Methods~\ref{methods-metrics}).
As the baseline, we took the best of the 24 models in Gilpin's benchmark \cite{Gilpin2023-ua} (NBEATS~\cite{Oreshkin2020-av}/NHiTS~\cite{Challu2022-tr}, \(13.9 \pm 2.8\,\lambda_{1}^{-1}\)) and matched its experimental conditions (Extended Data Table~\ref{table:para}).
Under these conditions, the probe-based construction achieved an average SMAPE prediction horizon of \(14.1 \pm 14.0\,\lambda_{1}^{-1}\) and the analytic construction \(13.9 \pm 13.9\,\lambda_{1}^{-1}\) (Extended Data Fig.~\ref{fig:ed-benchmark-smape}), matching the best reference models and roughly doubling the random construction (\(6.8 \pm 10.1\,\lambda_{1}^{-1}\); Extended Data Table~\ref{table:dysts}).
The proposed constructions are thus competitive with the reference models, but the variance across systems is far larger (\(\pm 14.0\) against \(\pm 2.8\)). They approximate some systems well and fail on others.
With the probe-based construction, 44 of the 134 systems sustained accurate prediction beyond \(20\,\lambda_{1}^{-1}\), while 46 fell below \(1\,\lambda_{1}^{-1}\) (Extended Data Figs.~\ref{fig:ed-benchmark-smape} and \ref{fig:si-heatmap}).
Our RC used one and the same configuration for all 134 systems, with only \(D'\) following the observation dimension (Methods~\ref{benchmark-evaluation-details}).
The spread is the price of a single fixed configuration.

\section{Discussion}\label{discussion}

This study shows that a fast-slow separation can be introduced into a data-driven dynamical-system learner, the reservoir computer, in advance and in a task-agnostic manner.
We confirmed that restricting the number of slow modes available to learning, while keeping all remaining modes fast, suppresses the emergence of spurious slow modes.
We found that reconstructing the attractor on the slow subspace created by the fast-slow gap---attractor reconstruction in attracting subspaces (ARAS)---leads to robust reproduction.
Moreover, installing this gap with an ample margin makes the reconstruction normally attracting \cite{Fenichel1979-ej, Hirsch1977-ry, Kuehn2015-pl}, so the reproduction of the dynamics is robust to training error and to small changes in the internal weights.
The reproduction is also robust to the choice of the hyperparameters and of the invariant subspace (Supplementary Figs.~\ref{fig:si-eigsel}--\ref{fig:si-grid-le}).

RC is increasingly implemented as ``physical RC'' on diverse substrates \cite{Tanaka2019-vn}, where the reservoir is typically a black box.
The probe-based construction requires neither \(\mathbf{A}\) itself nor knowledge of the nonlinearity of the physical reservoir.
It assumes only that the input layer can be engineered and that the input--output response of the reservoir can be measured (Methods~\ref{probe-based-construction}).
In our experiments, it attained the stability and accuracy of the analytic construction. The resulting physical RC is thus expected to be robust to input and internal noise, and to gradual drift of the internal weights of the device.

The failure mode of conventional RC identified in this work can be restated in control-theoretic terms.
With a random input layer, the pair \((\mathbf{A}, \mathbf{W}_{\mathrm{in}})\) is generically controllable \cite{Kailath1980-ou, Grigoryeva2023-cw}, which means that the feedback \(\mathbf{W}_{\mathrm{in}}\mathbf{W}_{\mathrm{out}}\) installed by training can, in principle, relocate all \(N\) eigenvalues of \(\mathbf{A}\).
This freedom is exactly what allows spurious slow modes to arise, and our design removes it deliberately.
The pair becomes uncontrollable, only \(D'\) eigenvalues remain learnable, and the remaining \(N-D'\) can never turn slow.
Controllability is ordinarily a desirable property. The embedding theory of RC employs it as a sufficient condition for embedding the attractor \cite{Hart2020-kx, Grigoryeva2023-cw}, and network science seeks it by identifying driver nodes \cite{Liu2011-qs}.
For the closed-loop reproduction of dynamics, however, appropriately restricting controllability is beneficial.

Finally, the principle extends naturally to other frameworks that model dynamics in a representation space, such as state-space models~\cite{Gu2021-uh, Gu2023-xy}, sparse identification of nonlinear dynamics~\cite{Brunton2016-gc}, and (extended) dynamic mode decomposition~\cite{Schmid2010-lj, Williams2015-jg}.
To reproduce real-world dynamics robustly, the attractor should be reconstructed on a submanifold of the minimal dimension needed to represent the structure of the target faithfully, and this submanifold should be kept attracting (ARAS).
We propose designing learning machines on this basis as a key step toward reliable prediction, data-driven modeling, and physical AI.

\section{Methods}\label{methods}

\subsection{Reservoir computing fundamentals and closed-loop operation}
\label{reservoir-computing-fundamentals-and-closed-loop-operation}

We use a standard discrete-time RC model \cite{Jaeger2001-vs, Jaeger2007-py, Maass2002-me, Lukosevicius2009-vt}.
Consider a target dynamical system \(f\) evolving on an invariant set \(I\), where the state \(\mathbf{x}_t \in I\) is observed as a \(D\)-dimensional vector \(\mathbf{u}_t = \mathbf{h}(\mathbf{x}_t)\) through the observation map \(\mathbf{h}: I \to \mathbb{R}^D \).
The RC receives this input time series \(\mathbf{u}_t\) and updates its \(N\)-dimensional reservoir state \(\mathbf{r}_t \in \mathbb{R}^N\) according to:
\begin{equation}\label{eq:rc_update}
    \mathbf{r}_{t+1} = \bar{f}(\mathbf{r}_t, \mathbf{W}_{\mathrm{in}}\mathbf{u}_t) = \boldsymbol{\tau}(\mathbf{A} \mathbf{r}_t + \mathbf{W}_{\mathrm{in}} \mathbf{u}_t + \mathbf{b}),
\end{equation}
where \(\mathbf{A} \in \mathbb{R}^{N \times N}\), \(\mathbf{W}_{\mathrm{in}} \in \mathbb{R}^{N \times D}\), and \(\mathbf{b} \in \mathbb{R}^N\) denote the reservoir transition matrix, the input weight matrix, and the bias vector, respectively.
We adopted the hyperbolic tangent \(\boldsymbol{\tau}(\cdot) = \boldsymbol{\tanh}(\cdot)\), the standard choice of activation function in echo state networks \cite{Jaeger2001-vs}, a widely used form of RC implemented as recurrent neural networks (RNNs).
These internal parameters are fixed rather than trained, and in the standard configuration they are generated randomly.
The hyperparameter values used in this study are listed in Extended Data Table~\ref{table:para}.

The theoretical foundation for ESN's success in modeling dynamical systems lies in the echo state property (ESP) \cite{Jaeger2001-vs, Jaeger2007-py}.
ESP signifies that, when driven by a common input signal, the reservoir states asymptotically converge to a unique trajectory, regardless of the initial reservoir state.
The driven reservoir state then becomes a function of the target state, \(\mathbf{r}_t = \Phi(\mathbf{x}_t)\) as \(t \to \infty\) \cite{Lu2018-vk, Hart2020-kx, Grigoryeva2021-mg}.
If \(\Phi\) is an embedding (a diffeomorphism onto its image), a set topologically equivalent to the target invariant set is reconstructed within the reservoir state space.

The output weight matrix \(\mathbf{W}_{\mathrm{out}}\) is trained on the driven reservoir states to predict the next observation \(\mathbf{u}_{t+1}\).
Using ridge regression with a regularization parameter \(\beta\), the optimal weights are given by:
\begin{equation}\label{eq:readout_learning}
    \mathbf{W}_{\mathrm{out}} = \mathbf{Y} \mathbf{R}^T (\mathbf{R} \mathbf{R}^T + \beta \mathbf{I})^{-1},
\end{equation}
where \(\mathbf{R}\) and \(\mathbf{Y}\) denote the matrices collecting the reservoir states \(\mathbf{r}_t\) and the targets \(\mathbf{u}_{t+1}\), respectively, over \(T_{\mathrm{train}}\) time steps following a warmup period of \(T_{\mathrm{warm}}\) steps that washes out the dependence on the initial reservoir state.

After training, the RC performs autonomous time-series generation (closed-loop prediction).
This is achieved by feeding back the model's output \(\hat{\mathbf{u}}_t = \mathbf{W}_{\mathrm{out}}\mathbf{r}_t\) as the input for the next time step.
The dynamics of the closed-loop system are then described as the following autonomous dynamical system with respect to the reservoir state \(\mathbf{r}_t\):

\begin{equation}\label{eq:closed_loop}
    \mathbf{r}_{t+1} = \boldsymbol{\tau}(\mathbf{A} \mathbf{r}_t + \mathbf{W}_{\mathrm{in}}\mathbf{W}_{\mathrm{out}}\mathbf{r}_t + \mathbf{b} ) = \boldsymbol{\tau}( \mathbf{W}_{\mathrm{cl}}\mathbf{r}_t + \mathbf{b} ) =: \hat{f}(\mathbf{r}_t),
\end{equation}
where we define \(\mathbf{W}_{\mathrm{cl}} := \mathbf{A} + \mathbf{W}_{\mathrm{in}}\mathbf{W}_{\mathrm{out}}\) as the closed-loop matrix.

\subsection{Configuration of Reservoir configuration}
\label{methods-reservoir-config}

The RC model configuration follows the formulation of Eqs.~\eqref{eq:rc_update} and \eqref{eq:closed_loop}.
Unless otherwise specified, the common hyperparameters listed in Extended Data Table~\ref{table:para} were used in the experiments.
The reservoir transition matrix \(\mathbf{A}\) was generated as an Erd\H{o}s--R\'enyi random graph with connection probability \(p_A\), its nonzero entries drawn from a uniform distribution, and rescaled so that its spectral radius matched the target value \(\rho\).
For the input weight matrix \(\mathbf{W}_{\mathrm{in}}\), we compared the conventional random initialization, whose elements were drawn from a uniform distribution, with our proposed constructions (Secs.~\ref{deterministic-input-layer-construction-proposed-method} and \ref{probe-based-construction}).
In every construction, each column of \(\mathbf{W}_{\mathrm{in}}\) was normalized to unit norm.
Because the input data were normalized to \([-1, 1]\), we did not tune the input scaling.
The output weight matrix \(\mathbf{W}_{\mathrm{out}}\) was optimized via ridge regression as shown in Eq.~\eqref{eq:readout_learning}.
The bias vector was constant, \(\mathbf{b} = b_0\mathbf{1}\) with \(b_0 = 0.01\). The initial state \(\mathbf{r}_0\) was sampled from a uniform distribution.
The dependence on these hyperparameters is examined in the Supplementary hyperparameter grid (Supplementary Figs.~\ref{fig:si-grid} and \ref{fig:si-grid-le}).

\subsection{Analytic construction for a known reservoir (proposed method)}\label{deterministic-input-layer-construction-proposed-method}

To address the instability that stems from spurious slow modes of the closed-loop matrix \(\mathbf{W}_{\mathrm{cl}}\), we design the input layer so as to restrict the number of eigenvalues of \(\mathbf{A}\) that the training perturbation \(\mathbf{W}_{\mathrm{in}}\mathbf{W}_{\mathrm{out}}\) can move.
Specifically, we construct the columns of the input weight matrix \(\mathbf{W}_{\mathrm{in}} \in \mathbb{R}^{N \times D}\) from an invariant subspace of the reservoir transition matrix \(\mathbf{A} \in \mathbb{R}^{N \times N}\).
From the eigenvectors of \(\mathbf{A}\), we select \(D'\) linearly independent real basis vectors \(\{\mathbf{v}_1, \dots, \mathbf{v}_{D'}\}\) spanning a \(D'\)-dimensional real \(\mathbf{A}\)-invariant subspace \(V_{D'} = \mathrm{span}\{\mathbf{v}_1, \dots, \mathbf{v}_{D'}\}\) (\(D' \ge D\)).
Here \(D\) is the input dimension, and \(D'\) corresponds to the number of eigenvalues that training can move, as explained below.
To construct this invariant subspace, the real eigenvectors are selected directly for the real eigenvalues.
For complex conjugate eigenvalue pairs, the real and imaginary parts of the eigenvectors are used to keep the basis real-valued.\footnote{The invariant subspace can also be obtained directly from an ordered real Schur decomposition of \(\mathbf{A}\), without computing eigenvectors. This also covers the case where \(\mathbf{A}\) is not diagonalizable, in which generalized eigenvectors replace the eigenvectors in the construction above.}
Any \(\mathbf{W}_{\mathrm{in}}\) whose columns lie in \(V_{D'}\) leaves the remaining \(N-D'\) eigenvalues of the closed-loop matrix unchanged, as guaranteed by Theorem~\ref{thm:invariant_eigenvalues} below.
Throughout this work we set \(D' = D\) (\(=3\) in the Lorenz experiments; the two coincide only when all state variables are observed).
Collecting the basis vectors into \(\mathbf{V}_{D'} = [\mathbf{v}_1, \dots, \mathbf{v}_{D'}] \in \mathbb{R}^{N \times D'}\), we construct the input weight matrix as
\begin{equation}\label{eq:win_construction}
    \mathbf{W}_{\mathrm{in}} = \mathbf{V}_{D'}\mathbf{K},
\end{equation}
where \(\mathbf{K} \in \mathbb{R}^{D' \times D}\) is an arbitrary coefficient matrix of full column rank, so that the column space of \(\mathbf{W}_{\mathrm{in}}\) lies within the invariant subspace: \(\mathrm{Im}(\mathbf{W}_{\mathrm{in}}) \subseteq V_{D'}\).
In our experiments we took \(\mathbf{K} = \operatorname{diag}(\|\mathbf{v}_1\|^{-1}, \dots, \|\mathbf{v}_D\|^{-1})\), so that the columns of \(\mathbf{W}_{\mathrm{in}}\) are the basis vectors normalized to unit norm.

Given this construction of \(\mathbf{W}_{\mathrm{in}}\), we establish the following theorem regarding the spectral properties of the closed-loop matrix \(\mathbf{W}_{\mathrm{cl}} = \mathbf{A} + \mathbf{W}_{\mathrm{in}}\mathbf{W}_{\mathrm{out}}\).

\begin{thm}[Invariant Eigenvalues]\label{thm:invariant_eigenvalues}
    Let \(\mathbf{A} \in \mathbb{R}^{N \times N}\) be a real matrix and let \(V_{D'} \subseteq \mathbb{R}^{N}\) be a \(D'\)-dimensional real \(\mathbf{A}\)-invariant subspace. Let \(\{\mu_1, \dots, \mu_{D'}\}\) be the eigenvalues of \(\mathbf{A}\) corresponding to \(V_{D'}\), and \(\{\mu_{D'+1}, \dots, \mu_N\}\) the remaining eigenvalues. If a matrix \(\mathbf{B} \in \mathbb{R}^{N \times D}\) satisfies \(\mathrm{Im}(\mathbf{B}) \subseteq V_{D'}\) (i.e., the column vectors of \(\mathbf{B}\) are contained in \(V_{D'}\)), then for any real matrix \(\mathbf{C} \in \mathbb{R}^{D \times N}\), the spectrum of \(\mathbf{A}' = \mathbf{A} + \mathbf{B}\mathbf{C}\) contains the multiset \(\{\mu_{D'+1}, \dots, \mu_N\}\).\footnote{The theorem assumes only the \(\mathbf{A}\)-invariance of \(V_{D'}\). Diagonalizability of \(\mathbf{A}\) is not required.}
\end{thm}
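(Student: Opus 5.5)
The plan is to work in a basis adapted to the invariant subspace and show that \(\mathbf{A}' = \mathbf{A}+\mathbf{B}\mathbf{C}\) becomes block upper triangular, with the lower-right block unchanged from \(\mathbf{A}\). Choose a basis \(\{\mathbf{v}_1,\dots,\mathbf{v}_{D'}\}\) of \(V_{D'}\) and extend it by vectors \(\mathbf{w}_{D'+1},\dots,\mathbf{w}_N\) to a basis of \(\mathbb{R}^N\). Collect these into the invertible matrix \(\mathbf{P} = [\mathbf{V}_{D'}\ \mathbf{W}]\). By \(\mathbf{A}\)-invariance of \(V_{D'}\), we have \(\mathbf{A}\mathbf{V}_{D'} = \mathbf{V}_{D'}\mathbf{M}_{11}\) for some \(\mathbf{M}_{11}\in\mathbb{R}^{D'\times D'}\). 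Hence
\[
\mathbf{P}^{-1}\mathbf{A}\mathbf{P} = \begin{pmatrix} \mathbf{M}_{11} & \mathbf{M}_{12} \\ \mathbf{0} & \mathbf{M}_{22} \end{pmatrix}.
\]
The eigenvalues of \(\mathbf{A}\) ``corresponding to \(V_{D'}\)'' are, by definition, the spectrum of \(\mathbf{M}_{11}\), i.e.\ of \(\mathbf{A}|_{V_{D'}}\). Because the characteristic polynomial factorizes as \(\det(\lambda\mathbf{I}-\mathbf{A}) = \det(\lambda\mathbf{I}-\mathbf{M}_{11})\det(\lambda\mathbf{I}-\mathbf{M}_{22})\), the remaining multiset \(\{\mu_{D'+1},\dots,\mu_N\}\) is exactly the spectrum of \(\mathbf{M}_{22}\), counted with algebraic multiplicity. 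This step uses no diagonalizability, which matches the footnote to the theorem.

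Next, transform the perturbation the same way. Since \(\mathrm{Im}(\mathbf{B})\subseteq V_{D'}\), we can write \(\mathbf{B} = \mathbf{V}_{D'}\mathbf{G}\) for some \(\mathbf{G}\in\mathbb{R}^{D'\times D}\). Then \(\mathbf{P}^{-1}\mathbf{B} = \begin{pmatrix}\mathbf{G}\\ \mathbf{0}\end{pmatrix}\), because \(\mathbf{P}^{-1}\mathbf{V}_{D'}\) consists of the first \(D'\) standard basis columns. Consequently
\[
\mathbf{P}^{-1}\mathbf{B}\mathbf{C}\mathbf{P} = \begin{pmatrix}\mathbf{G}\mathbf{C}\mathbf{P} \\ \mathbf{0}\end{pmatrix},
\]
so only the top \(D'\) rows are affected. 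This gives
\[
\mathbf{P}^{-1}\mathbf{A}'\mathbf{P} = \begin{pmatrix} \mathbf{M}_{11}+\mathbf{X}_{1} & \mathbf{M}_{12}+\mathbf{X}_{2} \\ \mathbf{0} & \mathbf{M}_{22}\end{pmatrix},
\]
where \([\mathbf{X}_1\ \mathbf{X}_2] = \mathbf{G}\mathbf{C}\mathbf{P}\). Taking the determinant of the block triangular matrix yields \(\det(\lambda\mathbf{I}-\mathbf{A}') = \det(\lambda\mathbf{I}-\mathbf{M}_{11}-\mathbf{X}_1)\det(\lambda\mathbf{I}-\mathbf{M}_{22})\). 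Therefore the spectrum of \(\mathbf{A}'\), as a multiset, contains that of \(\mathbf{M}_{22}\), which is \(\{\mu_{D'+1},\dots,\mu_N\}\). As a byproduct, \(V_{D'}\) is also \(\mathbf{A}'\)-invariant: the zero lower-left block is preserved, which is the content used later in Corollary~\ref{cor:inheritance}.

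None of the steps is technically difficult. The only point needing care is making ``the eigenvalues corresponding to \(V_{D'}\)'' precise when \(\mathbf{A}\) is not diagonalizable. I would define them as the spectrum of the restriction \(\mathbf{A}|_{V_{D'}}\), and then invoke the characteristic polynomial factorization to identify the complementary multiset with the spectrum of the quotient map \(\mathbf{M}_{22}\), including multiplicities. With that definition fixed, the theorem follows from the block triangular determinant identity alone.
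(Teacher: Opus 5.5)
Your proof is correct and matches the paper's argument: extend a basis of \(V_{D'}\) to an invertible \(\mathbf{P}\), note that \(\mathbf{P}^{-1}\mathbf{A}\mathbf{P}\) is block upper triangular and \(\mathbf{P}^{-1}\mathbf{B}\) has a zero lower block, and conclude that \(\mathbf{P}^{-1}\mathbf{A}'\mathbf{P}\) is block upper triangular with its lower-right block unaffected by \(\mathbf{C}\). Your explicit characteristic-polynomial factorization pins down the complementary multiset with algebraic multiplicities, which the paper only asserts, and your byproduct remark on the \(\mathbf{A}'\)-invariance of \(V_{D'}\) is exactly Corollary~\ref{cor:inheritance}.
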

\begin{proof}
    The proof is provided in Supplementary Information.
\end{proof}

This theorem guarantees that when \(\mathbf{W}_{\mathrm{in}}\) is constructed from a \(D'\)-dimensional real invariant subspace \(V_{D'}\) of \(\mathbf{A}\) as in the proposed method, \(N-D'\) eigenvalues of the closed-loop matrix \(\mathbf{W}_{\mathrm{cl}}\) remain equal to those of \(\mathbf{A}\), regardless of the learned \(\mathbf{W}_{\mathrm{out}}\). Therefore, designing \(\mathbf{A}\) in advance with a small spectral radius \(\rho\) keeps all anchored eigenvalues fast and contracting.

\begin{cor}[Inheritance of Invariant Subspace]
    \label{cor:inheritance}
    Under the assumptions of Theorem \ref{thm:invariant_eigenvalues}, the subspace \(V_{D'}\) remains invariant under \(\mathbf{A}' = \mathbf{A} + \mathbf{B}\mathbf{C}\).
\end{cor}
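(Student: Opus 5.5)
The plan is a direct verification from the definition of invariance; Theorem~\ref{thm:invariant_eigenvalues} is not needed, only its hypotheses. Take an arbitrary $\mathbf{v}\in V_{D'}$ and compute
\[
\mathbf{A}'\mathbf{v}=\mathbf{A}\mathbf{v}+\mathbf{B}(\mathbf{C}\mathbf{v}).
\]
The first term lies in $V_{D'}$ because $V_{D'}$ is $\mathbf{A}$-invariant. For the second term, $\mathbf{C}\mathbf{v}\in\mathbb{R}^{D}$, so $\mathbf{B}(\mathbf{C}\mathbf{v})$ is a linear combination of the columns of $\mathbf{B}$. Hence it lies in $\mathrm{Im}(\mathbf{B})\subseteq V_{D'}$.

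Since $V_{D'}$ is a linear subspace, it is closed under addition, so $\mathbf{A}'\mathbf{v}\in V_{D'}$. As $\mathbf{v}$ was arbitrary, this gives $\mathbf{A}'V_{D'}\subseteq V_{D'}$, which is the claim. Note that nothing is assumed about $\mathbf{C}$ beyond its being real, and diagonalizability of $\mathbf{A}$ plays no role.

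I do not expect a real obstacle. The only point worth stating explicitly is that the argument uses $\mathrm{Im}(\mathbf{B}\mathbf{C})\subseteq\mathrm{Im}(\mathbf{B})$.

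It may also help to add a short remark linking this to the theorem. Choose a basis adapted to $V_{D'}$, so that $\mathbf{A}'$ becomes block upper triangular. The eigenvalues of $\mathbf{A}'$ restricted to $V_{D'}$ are then the $D'$ moved ones, which is the interpretation used in the main text.
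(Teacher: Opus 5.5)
Your proof is correct and is essentially the paper's own argument. The paper does the same direct computation in matrix form: from $\mathbf{A}\mathbf{V}_{D'}=\mathbf{V}_{D'}\mathbf{X}$ and $\mathbf{W}_{\mathrm{in}}=\mathbf{V}_{D'}\mathbf{K}$ it obtains $\mathbf{W}_{\mathrm{cl}}\mathbf{V}_{D'}=\mathbf{V}_{D'}(\mathbf{X}+\mathbf{K}\mathbf{W}_{\mathrm{out}}\mathbf{V}_{D'})$, which is your vector-wise check applied to a basis of $V_{D'}$ (the zero lower-left block in the theorem's block-triangular proof encodes the same fact).
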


Corollary~\ref{cor:inheritance} states that the invariant subspace \(V_{D'}\) used to construct the input layer coincides with the invariant subspace of \(\mathbf{W}_{\mathrm{cl}}\) that corresponds to the eigenvalues moved by training.
It follows from a direct computation.
Since \(V_{D'}\) is \(\mathbf{A}\)-invariant, there exists \(\mathbf{X} \in \mathbb{R}^{D' \times D'}\) such that \(\mathbf{A}\mathbf{V}_{D'} = \mathbf{V}_{D'}\mathbf{X}\), and hence
\begin{equation}\label{eq:aras}
    \mathbf{W}_{\mathrm{cl}}\mathbf{V}_{D'} = (\mathbf{A} + \mathbf{W}_{\mathrm{in}}\mathbf{W}_{\mathrm{out}})\mathbf{V}_{D'} = \mathbf{V}_{D'}(\mathbf{X} + \mathbf{K}\mathbf{W}_{\mathrm{out}}\mathbf{V}_{D'}),
\end{equation}
so \(V_{D'}\) is invariant under \(\mathbf{W}_{\mathrm{cl}}\).
The \(D'\) eigenvalues that training can move are those of \(\mathbf{W}_{\mathrm{cl}}\) corresponding to \(V_{D'}\), represented by \(\mathbf{X} + \mathbf{K}\mathbf{W}_{\mathrm{out}}\mathbf{V}_{D'}\), whereas the remaining \(N-D'\) eigenvalues stay at those of \(\mathbf{A}\) (Theorem \ref{thm:invariant_eigenvalues}) and are contracting by design.
In this linear picture, \(V_{D'}\) is an exponentially attracting slow subspace of the reservoir state space: trajectories decay toward \(V_{D'}\) at the fast rates set by \(\mathbf{A}\) and evolve within it under the \(D'\) learned slow modes.

\subsection{Probe-based construction for an unknown reservoir}\label{probe-based-construction}

Even when the reservoir transition matrix \(\mathbf{A}\) is inaccessible, as is typical for physical reservoirs, we can still design an input layer that induces the same fast-slow structure as in the previous section, using only the reservoir's input--output response.
We first drive the input-free reservoir to an equilibrium \(\mathbf{r}^{*} = \bar{f}(\mathbf{r}^{*}, \mathbf{0})\) and linearize the open-loop map \eqref{eq:rc_update} there, obtaining the state Jacobian \(\mathbf{J}^{*} = \partial\mathbf{r}_{t+1}/\partial\mathbf{r}_t\) and the input Jacobian \(\mathbf{B}^{*} = \partial\mathbf{r}_{t+1}/\partial\mathbf{u}_t\), both evaluated at \((\mathbf{r}^{*}, \mathbf{0})\).
We estimate \(\mathbf{J}^{*}\) by probing the black-box reservoir and design \(\mathbf{W}_{\mathrm{in}}\) so that the image of \(\mathbf{B}^{*}\) lies within a \(D'\)-dimensional \(\mathbf{J}^{*}\)-invariant subspace (see Supplementary Information~\ref{si:probe} for the full procedure).
By Theorem~\ref{thm:invariant_eigenvalues}, this anchors the \(N-D'\) transverse eigenvalues of the closed-loop linearization at \(\mathbf{r}^{*}\) to the contracting eigenvalues of \(\mathbf{J}^{*}\), and thereby induces a fast-slow structure in closed-loop operation.
The construction relies only on measured quantities and requires no knowledge of the explicit form of the open-loop map \eqref{eq:rc_update}.

\subsection{Basin stability}
\label{methods-basin}

To evaluate basin stability \cite{Menck2013-ex}, we randomly selected states from the \(2{,}000\) teacher-driven reservoir states \(\{\mathbf{r}_t\}\) and perturbed each by uniform noise drawn from \([-10^{-3}, 10^{-3}]^N\) to define the initial points.  After a transient of \(2{,}000\) steps of closed-loop operation, we tested whether the subsequent \(1{,}000\) steps of the closed-loop output \(\{\hat{\mathbf{u}}_t\}\) remain close to the true attractor \(\{\mathbf{u}_t\}\) in the observation space, by checking whether the directed Hausdorff distance from the prediction to the attractor satisfies \(\max_{t} \min_{t'} \|\hat{\mathbf{u}}_t - \mathbf{u}_{t'}\| < 0.25\). We generated 500 perturbed initial points and defined \(S_B\) as the fraction of them whose closed-loop trajectories return to the true attractor.

\subsection{Prediction metrics}
\label{methods-metrics}

Short-term prediction performance was evaluated by performing closed-loop prediction and calculating the valid prediction time (VPT) \cite{Vlachas2020-ob}.
VPT is defined as the time duration, normalized by the maximal Lyapunov exponent \(\lambda_{1}\), until the normalized root-mean-square error (NRMSE) exceeds a threshold of 0.5:
\[
    \text{NRMSE}(\hat{\mathbf{u}}_t) = \sqrt{ \frac{1}{D} \sum_{i=1}^{D} \frac{(\hat{u}_{t,i} - u_{t,i})^2}{\sigma_i^2} },
\]
\[
    \text{VPT} = \frac{1}{\lambda_{1}} \operatorname{argmax}_{t_f} \{ t_f \mid \text{NRMSE}(\hat{\mathbf{u}}_t) < 0.5, \forall t \le t_f \},
\]
where \(D\) is the dimension of the observation and \(\sigma_i^2\) denotes the variance of the \(i\)-th dimension of the target time series.

For the large-scale benchmark involving 134 systems, we also employed the cumulative SMAPE metric \cite{Gilpin2023-ua} defined as:
\[
    \epsilon_t \equiv \frac{200}{t} \sum_{t'=1}^{t} \frac{\|\mathbf{u}_{t'} - \hat{\mathbf{u}}_{t'}\|}{\|\mathbf{u}_{t'}\| + \|\hat{\mathbf{u}}_{t'}\|}.
\]
The SMAPE prediction horizon is the time, in units of \(\lambda_{1}^{-1}\), at which \(\epsilon_t\) first exceeds \(50\) .
Both metrics were evaluated over \(N_{\mathrm{test}}\) distinct prediction start points (Extended Data Table~\ref{table:para}), and the average over the start points was reported.

\subsection{Lyapunov spectrum and dominated-splitting test}
\label{methods-lyapunov}

The Lyapunov spectrum of the closed-loop RC was estimated using the standard QR-decomposition procedure \cite{Shimada1979-mv, Benettin1980-tk}, applied to the Jacobian of the closed-loop map \eqref{eq:closed_loop}, \(\mathrm{D}\hat{f}(\mathbf{r}_t) = \operatorname{diag}\!\big(1 - \boldsymbol{\tanh}^2(\mathbf{W}_{\mathrm{cl}}\mathbf{r}_t + \mathbf{b})\big)\,\mathbf{W}_{\mathrm{cl}} = \operatorname{diag}(1 - \mathbf{r}_{t+1}^2)\, \mathbf{W}_{\mathrm{cl}}\), along the teacher-driven trajectory \(\{\mathbf{r}_t\}\) generated by Eq.~\eqref{eq:open_loop}.
From this spectrum we evaluate whether the closed loop faithfully reproduces the chaotic character of the target system, by checking that the \(N\) closed-loop exponents contain the Lyapunov exponents of the target attractor.
The exponents that remain after removing the target ones are the transverse Lyapunov exponents (TLEs) \cite{Fujisaka1983-hu, Lu2018-vk}.
They quantify the transverse stability of the reconstructed attractor, and a positive TLE signals transverse instability.
Moreover, when every TLE lies below the most negative target exponent, as the proposed constructions achieve (Fig.~\figlink{fig:vptls}{e,f}), the transverse stability is especially strong. The map carrying the target attractor onto the reconstructed attractor is then differentiable---the defining property of strong generalized synchronization between the drive (target) and the response (reservoir) \cite{Rulkov1995-tq, Kocarev1996-mt, Hunt1997-aw}.

In the sweep experiments, the accuracy of the reproduced spectrum is summarized by the Lyapunov spectrum error (LSE),
\begin{equation}\label{eq:lse}
    \mathrm{LSE} = \sum_{i\,:\,\lambda_i \ge 0} \bigl|\hat\lambda_i-\lambda_i\bigr| + \sum_{i\,:\,\lambda_i < 0} \frac{\bigl|\hat\lambda_i-\lambda_i\bigr|}{|\lambda_i|},
\end{equation}
where \(\lambda_i\) and \(\hat\lambda_i\) are the Lyapunov exponents of the target system and of the closed-loop RC, respectively, both in descending order.
The non-negative exponents enter through their absolute errors and the negative ones through their relative errors.
Capturing the non-negative part of the spectrum matters more for the dynamics than matching the strongly contracting negative exponents exactly, and the relative normalization keeps their large magnitudes from dominating the metric.
For the Lorenz system, this reads
\[
    \mathrm{LSE} = |\hat\lambda_1-\lambda_1| + |\hat\lambda_2-\lambda_2| + \frac{|\hat\lambda_3-\lambda_3|}{|\lambda_3|}.
\]

We next describe how the dominated splitting reported in Sec.~\ref{normal-hyperbolicity} was tested. Using the same QR procedure, we propagated an orthonormal \(D'\)-frame \(\mathbf{Q}_t \in \mathbb{R}^{N\times D'}\) with the Jacobian cocycle \(\mathrm{D}\hat{f}(\mathbf{r}_t)\) along the \(2{,}000\)-step teacher-driven trajectory, discarding the first \(500\) steps as the transient of an arbitrary initial frame; \(\mathbf{Q}_t\) then spans the \(D'\) leading backward-Lyapunov directions \cite{Kuptsov2012-jj}. In this frame, the tangential block is \(\mathbf{T}_t=\mathbf{Q}_{t+1}^{\top}\mathrm{D}\hat{f}(\mathbf{r}_t)\,\mathbf{Q}_t\) and the transverse block is \(\mathbf{T}_t^{\perp}=(\mathbf{I}-\mathbf{Q}_{t+1}\mathbf{Q}_{t+1}^{\top})\,\mathrm{D}\hat{f}(\mathbf{r}_t)\,(\mathbf{I}-\mathbf{Q}_t\mathbf{Q}_t^{\top})\). Over a window of \(k\) steps we defined the maximal transverse growth rate \(\Lambda_k^{\perp}(t)=\tfrac1k\log\|\mathbf{T}_{t+k-1}^{\perp}\cdots\mathbf{T}_t^{\perp}\|_2\), the minimal tangential growth rate (strongest contraction rate) \(\Lambda_k^{T,\min}(t)=\tfrac1k\log\sigma_{\min}(\mathbf{T}_{t+k-1}\cdots\mathbf{T}_t)\), and the domination metric \(\Delta_k=\Lambda_k^{\perp}-\Lambda_k^{T,\min}\). We report \(\Lambda_k^{\perp}\) at \(k=4,8,16\) and \(\Delta_k\) at \(k=16\).

Combined with the strong transverse contraction (\(\sup_t \Lambda_k^{\perp}\ll 0\)), uniform domination at window \(k\) (\(\sup_t \Delta_k(t)<0\)) makes the reconstruction a partially hyperbolic invariant set \cite{Hirsch1977-ry}. Theorem~5.5 of \cite{Hirsch1977-ry} provides, through every point of such a compact set, a locally invariant manifold tangent to the slow bundle spanned by \(\mathbf{Q}_t\). The transverse contraction makes these locally invariant manifolds attracting.

We state the precise scope of this test. What we have established is partial hyperbolicity on a finite sample of the reconstructed attractor, which certifies a locally invariant attracting manifold through each sampled point. This does not guarantee the existence of a global attracting manifold containing the reconstructed attractor. Establishing that global object would require constructing a candidate manifold containing the reconstructed attractor and verifying normal attraction at every point of that manifold. Following the standard route of geometric singular perturbation theory \cite{Fenichel1979-ej, Kuehn2015-pl}, the closed-loop map \eqref{eq:closed_loop} can be decomposed into fast and slow coordinates---the time-scale separation set by the anchored spectral gap---and adiabatic elimination of the fast coordinates yields a critical manifold, a natural such candidate. The concrete computation lies beyond the scope of this paper, but we conjecture that this critical manifold, and the normally attracting slow submanifold it induces, lie near \(V_{D'}\) while deviating from it, possibly as graphs over \(V_{D'}\). Local or global, reconstructing the attractor on a normally attracting slow submanifold---the ARAS strategy---is effective for data-driven modeling because it can exploit the benefits of normal attraction, notably persistence under small perturbations.

\subsection{Persistence test}\label{methods-persistence}

We perturb the trained closed loop of the common Lorenz configuration (Extended Data Table~\ref{table:para}) by adding to \(\mathbf{W}_{\mathrm{cl}}\) a random matrix \(\mathbf{W}_{\varepsilon}\) with independent standard normal entries, rescaled so that \(\lVert\mathbf{W}_{\varepsilon}\rVert_{2}=\varepsilon\).
Starting from the teacher-forced state at the end of training, the perturbed closed loop \(\mathbf{r}_{t+1} = \boldsymbol{\tanh}\big((\mathbf{W}_{\mathrm{cl}}+\mathbf{W}_{\varepsilon})\mathbf{r}_t + \mathbf{b}\big)\) runs autonomously for \(5{,}000\) steps (\(\approx 67\,\lambda_{1}^{-1}\)).
Each run is classified by whether it stays on the attractor and continues to visit both wings, with the first \(2{,}000\) steps treated as a transient and excluded.
We repeat this over 20 reservoir realizations and 10 independent draws of \(\mathbf{W}_{\varepsilon}\), shared across the realizations and the levels \(\varepsilon \in \{1,\dots,10\}\times10^{-4}\), together with an unperturbed control for every realization.
The outcome frequencies and a single-configuration \(\varepsilon\) ladder are shown in Extended Data Figs.~\ref{fig:ed-eps-bars} and \ref{fig:ed-eps-ladder}.

\subsection{Benchmark data}
\label{methods-data}
\label{benchmark-evaluation-details}

All time series in this study were generated with the \texttt{dysts} benchmark library \cite{Gilpin2021-wp, Gilpin2023-ua}, which provides trajectories of 135 diverse low-dimensional chaotic dynamical systems together with their integration settings and largest Lyapunov exponents.
The detailed analyses presented in the Results section use the Lorenz-63 system \cite{Lorenz1963-kf} from this collection,
\begin{equation}\label{eq:lorenz}
    \dot{x} = 10\,(y - x), \qquad
    \dot{y} = x\,(28 - z) - y, \qquad
    \dot{z} = xy - \tfrac{8}{3}\,z,
\end{equation}
at the standard parameter values, with all three variables observed (\(\mathbf{u}_t = (x, y, z)\), \(D=3\)).
Prior to feeding the data into the RC, each state variable was preprocessed via dimension-wise linear scaling to map the values to the range \([-1, 1]\).

For the benchmark of Sec.~\ref{anchoring-improves-reproduction-of-chaotic-dynamics}, we followed the protocol of Gilpin \cite{Gilpin2021-wp, Gilpin2023-ua} and used 134 of the 135 catalogued systems, comprising 113 continuous autonomous, 6 delayed autonomous, and 15 non-autonomous equations.
Metrics were averaged over 10 realizations of the reservoir configuration.
The model is non-leaky (the leaking rate is fixed to unity) and uses fixed parameters throughout, with a symmetric reservoir matrix \(\mathbf{A}\) (Extended Data Table~\ref{table:para}).
In contrast, in Gilpin's benchmark the lookback window---the hyperparameter that sets the time scale of the model---was tuned for every system--model pair \cite{Gilpin2023-ua}.
The invariant-subspace dimension followed the observation dimension, \(D' = D\), and \(D' = D + 1\) for the non-autonomous systems, because an additional slow mode is required to represent the driving term.
For the training data size, we adopted the settings of Gilpin's echo state network model, with warmup steps \(T_{\mathrm{warm}} = 100\) and training steps \(T_{\mathrm{train}} = 900\).
Further details are given in Supplementary Sec.~\ref{si-benchmark-details}.

\backmatter
\bmhead{Acknowledgements}
This work was supported by JSPS KAKENHI Grant Number JP25KJ1714, JST ALCA-Next Grant No. JPMJAN23F2 and JST Moonshot R\&D Grant No. JPMJMS2021, JST PRESTO Grant No. JPMJPR25K5, JST CREST Grant No. JPMJCR25R1.

\section{Supplementary information} \label{supplementary-information}

\subsection{Theoretical foundation}
\label{theoretical-foundation}

\setcounter{thm}{0}

For completeness, we restate Theorem~\ref{thm:invariant_eigenvalues} from Methods~\ref{deterministic-input-layer-construction-proposed-method} and provide its proof.

\begin{thm}[Invariant Eigenvalues]\label{thm:invariant_eigenvalues_si}
    Let \(\mathbf{A} \in \mathbb{R}^{N \times N}\) be a real matrix and let \(V_{D'} \subseteq \mathbb{R}^{N}\) be a \(D'\)-dimensional real \(\mathbf{A}\)-invariant subspace. Let \(\{\mu_1, \dots, \mu_{D'}\}\) be the eigenvalues of \(\mathbf{A}\) corresponding to \(V_{D'}\), and \(\{\mu_{D'+1}, \dots, \mu_N\}\) the remaining eigenvalues. If a matrix \(\mathbf{B} \in \mathbb{R}^{N \times D}\) satisfies \(\mathrm{Im}(\mathbf{B}) \subseteq V_{D'}\) (i.e., the column vectors of \(\mathbf{B}\) are contained in \(V_{D'}\)), then for any real matrix \(\mathbf{C} \in \mathbb{R}^{D \times N}\), the spectrum of \(\mathbf{A}' = \mathbf{A} + \mathbf{B}\mathbf{C}\) contains the multiset \(\{\mu_{D'+1}, \dots, \mu_N\}\).
\end{thm}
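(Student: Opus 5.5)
We will prove the statement by passing to a basis adapted to \(V_{D'}\) and reading off the spectrum from a block upper-triangular form. Choose a real basis \(\mathbf{v}_1,\dots,\mathbf{v}_{D'}\) of \(V_{D'}\) and extend it by real vectors \(\mathbf{w}_{D'+1},\dots,\mathbf{w}_N\) to a basis of \(\mathbb{R}^N\). Collect these vectors into the invertible real matrix \(\mathbf{P} = [\mathbf{V}_{D'}\;\mathbf{W}]\). Since \(V_{D'}\) is \(\mathbf{A}\)-invariant, \(\mathbf{A}\mathbf{V}_{D'} = \mathbf{V}_{D'}\mathbf{X}\) for some \(\mathbf{X}\in\mathbb{R}^{D'\times D'}\). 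Hence
\[
\mathbf{P}^{-1}\mathbf{A}\mathbf{P} = \begin{pmatrix} \mathbf{X} & \mathbf{Y} \\ \mathbf{0} & \mathbf{Z} \end{pmatrix}
\]
with \(\mathbf{Z}\in\mathbb{R}^{(N-D')\times(N-D')}\). The characteristic polynomial factors as \(\det(\lambda\mathbf{I}-\mathbf{A}) = \det(\lambda\mathbf{I}-\mathbf{X})\det(\lambda\mathbf{I}-\mathbf{Z})\).

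The multiset of eigenvalues of \(\mathbf{A}\) ``corresponding to \(V_{D'}\)'' is, by definition, the spectrum of the restriction \(\mathbf{A}|_{V_{D'}}\), which is represented by \(\mathbf{X}\). The remaining multiset \(\{\mu_{D'+1},\dots,\mu_N\}\) is therefore exactly the spectrum of \(\mathbf{Z}\), counted with algebraic multiplicity. The characterization of \(\mathbf{Z}\) as the matrix of the induced map on the quotient \(\mathbb{R}^N/V_{D'}\) shows that this is independent of the choice of completion \(\mathbf{W}\). This identification is the one point requiring care: when \(\mathbf{A}\) is not diagonalizable, or when \(\mathbf{X}\) and \(\mathbf{Z}\) share eigenvalues, ``the remaining eigenvalues'' must be interpreted as the quotient spectrum. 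I would state this interpretation explicitly, which also covers the footnote's non-diagonalizable case.

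Next, conjugate the perturbation. Because \(\mathrm{Im}(\mathbf{B})\subseteq V_{D'}\), we can write \(\mathbf{B} = \mathbf{V}_{D'}\mathbf{K}\) for some \(\mathbf{K}\in\mathbb{R}^{D'\times D}\). Then
\[
\mathbf{P}^{-1}\mathbf{B}\mathbf{C}\mathbf{P} = \begin{pmatrix} \mathbf{I}_{D'} \\ \mathbf{0} \end{pmatrix}\mathbf{K}\mathbf{C}\mathbf{P} = \begin{pmatrix} \mathbf{E}_1 & \mathbf{E}_2 \\ \mathbf{0} & \mathbf{0} \end{pmatrix},
\]
where \([\mathbf{E}_1\;\mathbf{E}_2] = \mathbf{K}\mathbf{C}\mathbf{P}\). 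The perturbation affects only the top block row, so
\[
\mathbf{P}^{-1}\mathbf{A}'\mathbf{P} = \begin{pmatrix} \mathbf{X}+\mathbf{E}_1 & \mathbf{Y}+\mathbf{E}_2 \\ \mathbf{0} & \mathbf{Z} \end{pmatrix}.
\]
Consequently \(\det(\lambda\mathbf{I}-\mathbf{A}') = \det(\lambda\mathbf{I}-\mathbf{X}-\mathbf{E}_1)\det(\lambda\mathbf{I}-\mathbf{Z})\). The factor \(\det(\lambda\mathbf{I}-\mathbf{Z})\) has roots \(\{\mu_{D'+1},\dots,\mu_N\}\) with multiplicity, and it divides the characteristic polynomial of \(\mathbf{A}'\). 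Therefore the spectrum of \(\mathbf{A}'\) contains this multiset, for every choice of \(\mathbf{C}\).

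Corollary~\ref{cor:inheritance} follows from the same block form: the lower-left zero block is preserved, so \(V_{D'}\) is \(\mathbf{A}'\)-invariant. Apart from the multiplicity interpretation noted above, every step is routine linear algebra.
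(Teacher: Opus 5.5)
Your proof is correct and is essentially the paper's argument: an adapted basis puts \(\mathbf{A}\) in block upper-triangular form, the condition \(\mathrm{Im}(\mathbf{B})\subseteq V_{D'}\) confines \(\mathbf{B}\mathbf{C}\) to the top block row, and the untouched lower-right block (your \(\mathbf{Z}\), the paper's \(\mathbf{A}_{22}\)) contributes a factor that divides the characteristic polynomial of \(\mathbf{A}'\). Your remark that the ``remaining eigenvalues'' should be read as the spectrum of the induced map on \(\mathbb{R}^N/V_{D'}\), and are therefore independent of the completion, is a worthwhile clarification that the paper leaves implicit.
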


\begin{proof}
    Choose \(\mathbf{P}_{D'} \in \mathbb{R}^{N \times D'}\) whose columns form a basis of \(V_{D'}\), and extend it to an invertible matrix \(\mathbf{P} = [\mathbf{P}_{D'}, \mathbf{P}_{N-D'}] \in \mathbb{R}^{N \times N}\).
    Since \(V_{D'}\) is \(\mathbf{A}\)-invariant, the similarity transformation \(\tilde{\mathbf{A}} = \mathbf{P}^{-1} \mathbf{A} \mathbf{P}\) assumes a block upper triangular form:
    \[
        \tilde{\mathbf{A}} =
        \begin{pmatrix}
            \mathbf{A}_{11} & \mathbf{A}_{12} \\
            \mathbf{0}      & \mathbf{A}_{22}
        \end{pmatrix},
    \]
    where the spectrum of \(\mathbf{A}_{22}\) is the complementary multiset \(\{\mu_{D'+1}, \dots, \mu_N\}\).

    Since \(\mathrm{Im}(\mathbf{B}) \subseteq V_{D'} = \mathrm{Im}(\mathbf{P}_{D'})\), the transformed matrix \(\tilde{\mathbf{B}} = \mathbf{P}^{-1} \mathbf{B}\) takes the form:
    \[
        \tilde{\mathbf{B}} = \begin{pmatrix} \mathbf{B}_1 \\ \mathbf{0} \end{pmatrix},
    \]
    where \(\mathbf{B}_1 \in \mathbb{R}^{D' \times D}\).
    Partition \(\mathbf{C} \mathbf{P}\) as \([\mathbf{C}_1, \mathbf{C}_2]\), where \(\mathbf{C}_1 \in \mathbb{R}^{D \times D'}\) and \(\mathbf{C}_2 \in \mathbb{R}^{D \times (N-D')}\).
    The similarity transformation of \(\mathbf{A}' = \mathbf{A} + \mathbf{B}\mathbf{C}\) is then given by:
    \[
        \begin{aligned}
            \tilde{\mathbf{A}}' & = \mathbf{P}^{-1} (\mathbf{A} + \mathbf{B}\mathbf{C}) \mathbf{P}                                                                                                                                                       \\
                                & = \tilde{\mathbf{A}} + \tilde{\mathbf{B}} (\mathbf{C} \mathbf{P})                                                                                                                                                      \\
                                & = \begin{pmatrix} \mathbf{A}_{11} & \mathbf{A}_{12} \\ \mathbf{0} & \mathbf{A}_{22} \end{pmatrix} + \begin{pmatrix} \mathbf{B}_1 \\ \mathbf{0} \end{pmatrix} \begin{pmatrix} \mathbf{C}_1 & \mathbf{C}_2 \end{pmatrix} \\
                                & = \begin{pmatrix} \mathbf{A}_{11} & \mathbf{A}_{12} \\ \mathbf{0} & \mathbf{A}_{22} \end{pmatrix} + \begin{pmatrix} \mathbf{B}_1 \mathbf{C}_1 & \mathbf{B}_1 \mathbf{C}_2 \\ \mathbf{0} & \mathbf{0} \end{pmatrix}     \\
                                & = \begin{pmatrix} \mathbf{A}_{11} + \mathbf{B}_1 \mathbf{C}_1 & \mathbf{A}_{12} + \mathbf{B}_1 \mathbf{C}_2 \\ \mathbf{0} & \mathbf{A}_{22} \end{pmatrix}.
        \end{aligned}
    \]
    The spectrum of the block triangular matrix \(\tilde{\mathbf{A}}'\) is the union, with multiplicities, of the spectra of its diagonal blocks.
    The bottom-right block remains exactly \(\mathbf{A}_{22}\) regardless of \(\mathbf{C}\).
    Thus, the spectrum of \(\mathbf{A}'\) always contains the multiset \(\{\mu_{D'+1}, \dots, \mu_N\}\).
\end{proof}

\subsection{Details of the probe-based construction}
\label{si:probe}

This section gives the full procedure summarized in Methods~\ref{probe-based-construction}. The design is driven by the measured Jacobian \(\mathbf{J}^{*}\) in place of the reservoir transition matrix \(\mathbf{A}\), which need not be accessible for a black-box reservoir.

\noindent
\textbf{Linearization at the input-free equilibrium.}\;
We first drive the reservoir with zero input until it settles on the input-free equilibrium \(\mathbf{r}^{*} = \bar{f}(\mathbf{r}^{*}, \mathbf{0})\), found by iterating the map to convergence. We write the reservoir update \eqref{eq:rc_update} as \(\mathbf{r}_{t+1}=\bar{f}(\mathbf{r}_t,\mathbf{v}_t)\), a general black-box map taking the injected input \(\mathbf{v}_t=\mathbf{W}_{\mathrm{in}}\mathbf{u}_t\), with \(\mathbf{r}^{*}=\bar{f}(\mathbf{r}^{*},\mathbf{0})\). Linearization about \((\mathbf{r}^{*},\mathbf{u}_t\equiv\mathbf{0})\) gives
\begin{equation}\label{eq:probe_linearization}
    \delta\mathbf{r}_{t+1}=\mathbf{J}^{*}\,\delta\mathbf{r}_t+\mathbf{B}^{*}\,\delta\mathbf{u}_t,\qquad
    \mathbf{J}^{*}=\frac{\partial \bar{f}}{\partial\mathbf{r}}\bigg|_{*},\quad \mathbf{G}:=\frac{\partial \bar{f}}{\partial\mathbf{v}}\bigg|_{*},\quad \mathbf{B}^{*}=\mathbf{G}\mathbf{W}_{\mathrm{in}},
\end{equation}
all Jacobians evaluated at \((\mathbf{r}^{*},\mathbf{0})\). Only the state and input Jacobians \(\mathbf{J}^{*}\) and \(\mathbf{B}^{*}\) are accessible through the input--output response, and the construction below uses only these.

\noindent
\textbf{Markov-parameter estimation.}\;
The open-loop response of the linearization \eqref{eq:probe_linearization} about \(\mathbf{r}^{*}\) is a convolution of the past inputs,
\[
    \delta\mathbf{r}_{t+1}=\sum_{k\ge0}\mathbf{M}_k\,\mathbf{u}_{t-k},\qquad \mathbf{M}_k=\mathbf{J}^{*k}\mathbf{B}^{*},
\]
with Markov parameters \(\mathbf{M}_k\) (in particular \(\mathbf{M}_0=\mathbf{B}^{*}\), \(\mathbf{M}_1=\mathbf{J}^{*}\mathbf{B}^{*}\)). We probe the reservoir with balanced pseudo-random binary sequences of amplitude \(\varepsilon\) (each of the \(2^{D}\) sign patterns equally often), running a \emph{paired} rollout from \(\mathbf{r}^{*}\), one driven by \(+\mathbf{u}_t\) and one by \(-\mathbf{u}_t\), and keeping the half-difference \(\tfrac12(\mathbf{r}^{+}_t-\mathbf{r}^{-}_t)\), which cancels the even-order nonlinearity and isolates \(\delta\mathbf{r}_t\). Truncating the sum at lag \(L\) and least-squares regressing \(\delta\mathbf{r}_{t+1}\) on \((\mathbf{u}_t,\dots,\mathbf{u}_{t-L})\) yields \(\mathbf{M}_0,\dots,\mathbf{M}_L\). We assume the reservoir is designed to be fast, with \(\mathbf{J}^{*}\) of small spectral radius (or norm) and hence quickly decaying Markov parameters, so that a finite lag \(L\) absorbs the memory and \(\mathbf{M}_0,\mathbf{M}_1\) are unbiased.

A single \(D\)-column probe excites only \(D\) of the \(N\) state directions, so we repeat the measurement with \(P=\lceil N/D\rceil\) probe input matrices \(\mathbf{W}_{\mathrm{in}}^{\mathrm{ref},1},\dots,\mathbf{W}_{\mathrm{in}}^{\mathrm{ref},P}\), their columns drawn from a Sylvester--Hadamard matrix so that together they span \(\mathbb{R}^{N}\), and stack the per-probe estimates into \(\mathbf{M}_0^{\mathrm{stack}},\mathbf{M}_1^{\mathrm{stack}}\in\mathbb{R}^{N\times PD}\). The shift relation \(\mathbf{M}_1=\mathbf{J}^{*}\mathbf{M}_0\) then reads \(\mathbf{M}_1^{\mathrm{stack}}=\mathbf{J}^{*}\mathbf{M}_0^{\mathrm{stack}}\), which we solve in the least-squares sense for the estimate \(\widehat{\mathbf{J}}^{*}\).

\noindent
\textbf{Input-layer construction.}\;
The design target is the closed-loop Jacobian evaluated at the input-free equilibrium \(\mathbf{r}^{*}\),\footnote{The closed-loop dynamics has in general its own fixed point \(\mathbf{r}_{\mathrm{cl}}^{*}=\hat{f}(\mathbf{r}_{\mathrm{cl}}^{*})\neq\mathbf{r}^{*}\). We evaluate the closed-loop linearization at the input-free equilibrium \(\mathbf{r}^{*}\), where \(\mathbf{J}^{*}\) and \(\mathbf{B}^{*}\) are measured. The two coincide when \(\mathbf{W}_{\mathrm{out}}\mathbf{r}^{*}=\mathbf{0}\), and are close for the near-origin equilibria used here.} which is
\begin{equation}\label{eq:probe_jcl}
    \mathbf{J}_{\mathrm{cl}}=\mathbf{J}^{*}+\mathbf{B}^{*}\mathbf{W}_{\mathrm{out}},\qquad \mathbf{B}^{*}=\mathbf{G}\mathbf{W}_{\mathrm{in}},
\end{equation}
so Theorem~\ref{thm:invariant_eigenvalues}, applied to \(\mathbf{J}^{*}+\mathbf{B}^{*}\mathbf{W}_{\mathrm{out}}\), shows that if the input Jacobian \(\mathbf{B}^{*}\) spans a \(D'\)-dimensional \(\mathbf{J}^{*}\)-invariant subspace \(V_J\), the remaining \(N-D'\) eigenvalues of \(\mathbf{J}_{\mathrm{cl}}\) stay at those of \(\mathbf{J}^{*}\), which are fast and contracting by design. The columns of \(\mathbf{V}_{D'}\) are a real basis of \(V_J\), the \(D'\)-dimensional invariant subspace of \(\widehat{\mathbf{J}}^{*}\) selected as in the analytic construction (with \(D'=D\) in this work).

To make \(\mathbf{B}^{*}=\mathbf{G}\mathbf{W}_{\mathrm{in}}\) span \(V_J\) we set \(\mathbf{W}_{\mathrm{in}}=\mathbf{G}^{-1}\mathbf{V}_{D'}\), giving \(\mathbf{B}^{*}=\mathbf{V}_{D'}\) exactly. Here \(\mathbf{G}^{-1}\) maps the desired input Jacobian to the input weights that produce it. The gain \(\mathbf{G}=\partial \bar{f}/\partial\mathbf{v}|_{*}\) (the derivative of the reservoir map with respect to the injected input at \(\mathbf{r}^{*}\)) is itself measured: each reference probe gives \(\mathbf{M}_0^{(p)}=\mathbf{G}\mathbf{W}_{\mathrm{in}}^{\mathrm{ref},p}\), so stacking the \(P\) probes and solving \(\mathbf{M}_0^{\mathrm{stack}}=\mathbf{G}\,[\mathbf{W}_{\mathrm{in}}^{\mathrm{ref},1}|\cdots|\mathbf{W}_{\mathrm{in}}^{\mathrm{ref},P}]\) for \(\mathbf{G}\).\footnote{For the elementwise activation used here \(\mathbf{G}=\mathrm{diag}(\mathbf{1}-\mathbf{r}^{*2})\) is diagonal, so a single reference probe suffices and each \(G_{ii}\) follows from a row-wise ratio \(G_{ii}=\sum_j(\mathbf{M}_0)_{ij}(\mathbf{W}_{\mathrm{in}}^{\mathrm{ref}})_{ij}\big/\sum_j(\mathbf{W}_{\mathrm{in}}^{\mathrm{ref}})_{ij}^{2}\).}
Every quantity entering the design is thus measured from the input--output response, and the explicit form of the open-loop map \eqref{eq:rc_update} is never needed.

\begin{table}[h!]
    \centering
    \caption{\textbf{Probe-based construction hyperparameters} used for the Lorenz reservoirs (\(N=200\), \(D=3\)) of Figs.~\ref{fig:bs}--\ref{fig:vptls}.}\label{table:probe}
    \begin{tabular}{@{}lll@{}}
        \toprule
        Symbol & Meaning & Value \\
        \midrule
        \(\varepsilon\) & Probe amplitude & \(3\times10^{-6}\) \\
        \(L\) & Regression lag (input taps) & 20 \\
        -- & PRBS sequences per probe & 3 \\
        -- & Steps per sequence & 1200 \\
        \(P\) & Number of Hadamard probes & \(\lceil N/D\rceil=67\) \\
        \(D'\) & Invariant-subspace dimension (\(=D\) here) & 3 (1 conjugate pair) \\
        \bottomrule
    \end{tabular}
\end{table}

\subsection{Eigenvalue-selection dependence}\label{si:eigsel}

\setcounter{figure}{0}
\renewcommand{\thefigure}{S\arabic{figure}}

We tested whether the performance of the analytic construction depends on which invariant subspace of \(\mathbf{A}\) is selected.
The representative Lorenz setting of Figs.~\ref{fig:bs}--\ref{fig:vptls} was used throughout (\(N=200\), \(D'=D=3\), \(\rho=0.4\)).
A real three-dimensional invariant subspace of a real matrix can be spanned in only two ways: by three real eigenvectors, or by one complex-conjugate pair (its real and imaginary parts) together with one real eigenvector.
Both types were swept: all 56 selections of the first type, and 101 selections of the second type, stratified by the mean modulus of the selected eigenvalues.
For each of the 157 selections, VPT was evaluated over 20 reservoir realizations, and the Lyapunov exponents \(\hat\lambda_1\)--\(\hat\lambda_4\) over one realization.

The dependence is flat (Supplementary Fig.~\ref{fig:si-eigsel}).
Across all 157 selections, the per-selection mean VPT lands in a band only \(0.9\,\lambda_{1}^{-1}\) wide (\(17.0\)--\(17.9\,\lambda_{1}^{-1}\); median standard deviation \(0.4\)).
The two selection types give nearly the same mean VPT (\(17.6\pm0.5\) and \(17.6\pm0.4\)).
All three target exponents are reproduced accurately for every selection, and \(\hat\lambda_4\) never rises above \(-48\).
No spurious slow mode appears for any admissible choice of the invariant subspace.

\begin{figure}[H]
    \centering
    \includegraphics[width=0.99\linewidth,height=\textheight,keepaspectratio]{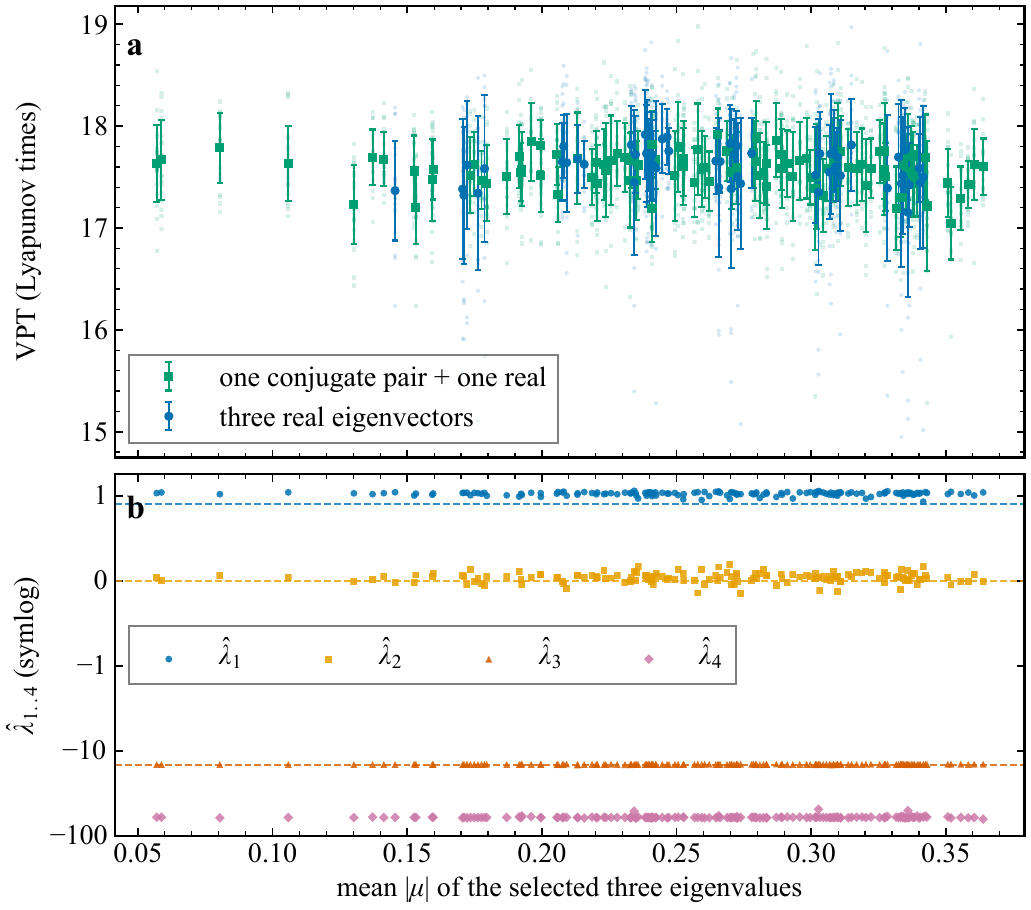}
    \caption{\textbf{Reconstruction quality does not depend on which invariant subspace is selected.}
    Each point is one three-dimensional invariant subspace of \(\mathbf{A}\), used to build \(\mathbf{W}_{\mathrm{in}}\) by the analytic construction at the representative setting (Lorenz, \(N=200\), \(\rho=0.4\)); the horizontal axis is the mean modulus of the selected three eigenvalues.
    \textbf{a}, VPT over 20 reservoir realizations (markers, mean; bars, standard deviation; faint dots, individual realizations) for the two admissible subspace types: three real eigenvectors (blue dots) and one conjugate pair plus one real eigenvector (green squares).
    \textbf{b}, Estimated Lyapunov exponents of the trained closed loop for a single realization, on a symmetric-log axis; dashed lines mark the true values \(\lambda_{1,2,3}=0.91, 0, -14.57\). All LEs are recovered for all 157 subspaces and \(\hat\lambda_4\) stays below \(-48\): no spurious slow mode appears for any admissible choice.}
    \label{fig:si-eigsel}
\end{figure}

\subsection{Dependence on the invariant-subspace dimension}\label{si:dprime}

We varied the invariant-subspace dimension \(D'\) from the minimal \(D'=3\) to the unconstrained limit \(D'=N=200\), at the same representative setting.
The invariant subspace was built from the eigenvalues with the largest real parts, keeping complex-conjugate pairs together.
Each column of \(\mathbf{W}_{\mathrm{in}}\) was drawn as a random linear combination of orthonormal basis vectors of this subspace.
The statement of Theorem~\ref{thm:invariant_eigenvalues} was verified directly at every \(D'\): the \(N-D'\) unselected eigenvalues of \(\mathbf{A}\) reappear in the spectrum of \(\mathbf{W}_{\mathrm{cl}}\), and exactly \(D'\) eigenvalues are moved (Supplementary Figs.~\figlink{fig:si-dprime}{a--f}).

Increasing \(D'\) degrades the performance through spurious slow modes (Supplementary Figs.~\figlink{fig:si-dprime}{g--m}).
In the prediction horizon, the degradation appears only from \(D'=50\) on.
Every realization keeps a VPT above \(11\,\lambda_{1}^{-1}\) at \(D'=3\) and \(10\), whereas realizations with VPT below \(10\,\lambda_{1}^{-1}\) appear at larger \(D'\) (4 of 20 at \(D'=50\), 9 of 20 at \(D'=200\); Supplementary Fig.~\figlink{fig:si-dprime}{g}).
The Lyapunov spectrum is more sensitive.
At \(D'=3\), every realization recovers all target exponents.
As \(D'\) grows, spurious Lyapunov exponents gradually appear (Supplementary Figs.~\figlink{fig:si-dprime}{h--m}).
These exponents come from slow modes that are unnecessary for reproducing the Lorenz dynamics but are admitted by the larger subspace.
Keeping \(D'\) minimal is thus part of the design.

\begin{figure}[H]
    \centering
    \includegraphics[width=0.99\linewidth,height=\textheight,keepaspectratio]{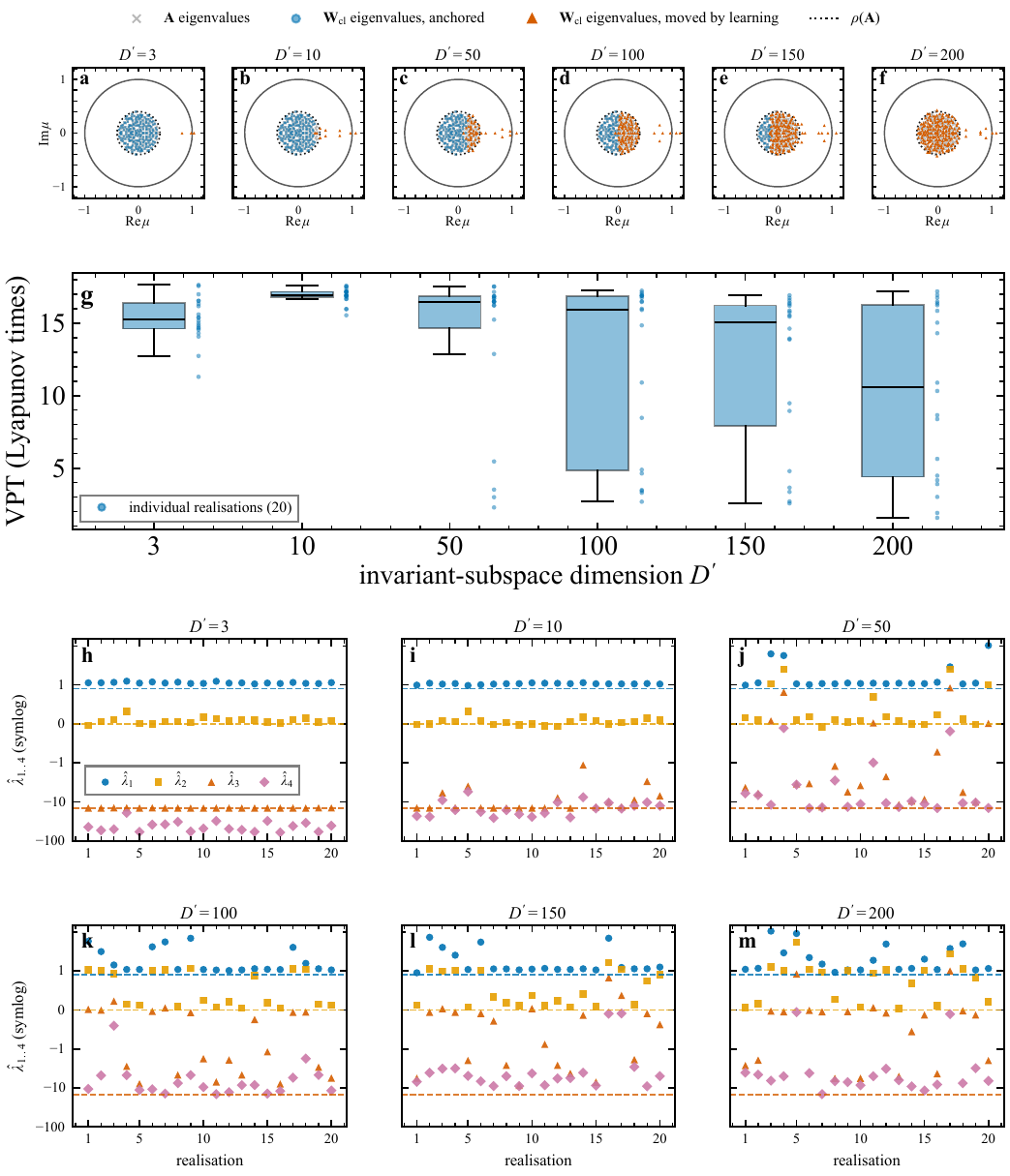}
    \caption{\textbf{Spurious slow modes appear as \(D'\) increases.}
    Analytic construction at the representative setting (Lorenz, \(N=200\), \(\rho=0.4\)), with \(\mathbf{W}_{\mathrm{in}}\) drawn as a random combination within the \(D'\)-dimensional invariant subspace built from the eigenvalues with the largest real parts.
    \textbf{a}--\textbf{f}, Closed-loop spectra for one realization at \(D'=3,10,50,100,150,200\): gray dashed circle, the reservoir spectrum \(\sigma(\mathbf{A})\); blue dots, the anchored eigenvalues; orange triangles, the eigenvalues moved by training. At every \(D'\), the anchored set is the one predicted by Theorem~\ref{thm:invariant_eigenvalues} and the number of moved eigenvalues equals \(D'\).
    \textbf{g}, VPT over 20 reservoir realizations at each \(D'\).
    \textbf{h}--\textbf{m}, Estimated Lyapunov exponents \(\hat\lambda_1\)--\(\hat\lambda_4\) of every realization, on a symmetric-log axis; dashed lines mark the true values. All LEs are recovered by every realization at \(D'=3\), and spurious exponents gradually appear as \(D'\) grows.}
    \label{fig:si-dprime}
\end{figure}

\subsection{Hyperparameter grid}\label{si:grid}

We swept the reservoir topology (Erd\H{o}s--R\'enyi with \(p_A=0.3\), dense random, and dense symmetric, the dense variants with entries drawn from a uniform distribution), the three input-layer constructions (random, analytic, probe-based), the spectral radius \(\rho\in\{0.01, 0.1, 0.2, \dots, 0.9\}\), and the reservoir dimension \(N\in\{50, 60, \dots, 100, 150, 200, 250, 300\}\), with 20 reservoir realizations per cell and \(\beta=0\) throughout (18{,}000 realizations).
For a symmetric \(\mathbf{A}\), whose spectrum is real, both proposed constructions use three real eigenvectors.

The proposed constructions roughly double the region of hyperparameters that sustains accurate prediction  (Supplementary Fig.~\ref{fig:si-grid}).
Cells with mean VPT above \(10\,\lambda_{1}^{-1}\) (orange outline in Supplementary Fig.~\ref{fig:si-grid}) cover \(47\%\) and \(52\%\) of the grid for the analytic and probe-based constructions, against \(25\%\) for the random one.
The Lyapunov spectrum separates the constructions more sharply than the prediction horizon (Supplementary Fig.~\ref{fig:si-grid-le}): the median LSE is \(0.25\)--\(0.28\) for the proposed constructions against \(2.8\)--\(7.5\) for the random one, and the fraction of the grid with LSE below \(0.5\) is \(0.70\)--\(0.83\) against \(0.09\)--\(0.19\).

The reservoir topology, by contrast, has little effect on the performance.
The relation between the topology and RC performance has been studied repeatedly, and Jaeger confirmed the insensitivity \cite{Busing2010-mq}.
Our grid shows the same insensitivity, with the symmetric reservoir slightly better than the other two.
For a symmetric \(\mathbf{A}\), the largest singular value equals the spectral radius, so \(\rho<1\) guarantees \(\lVert\mathbf{A}\rVert_{2}<1\) and the echo state property holds rigorously \cite{Jaeger2001-vs}.
We attribute the slight advantage to the stronger and more uniform contraction that this bound provides.

\begin{figure}[H]
    \centering
    \includegraphics[width=0.99\linewidth,height=\textheight,keepaspectratio]{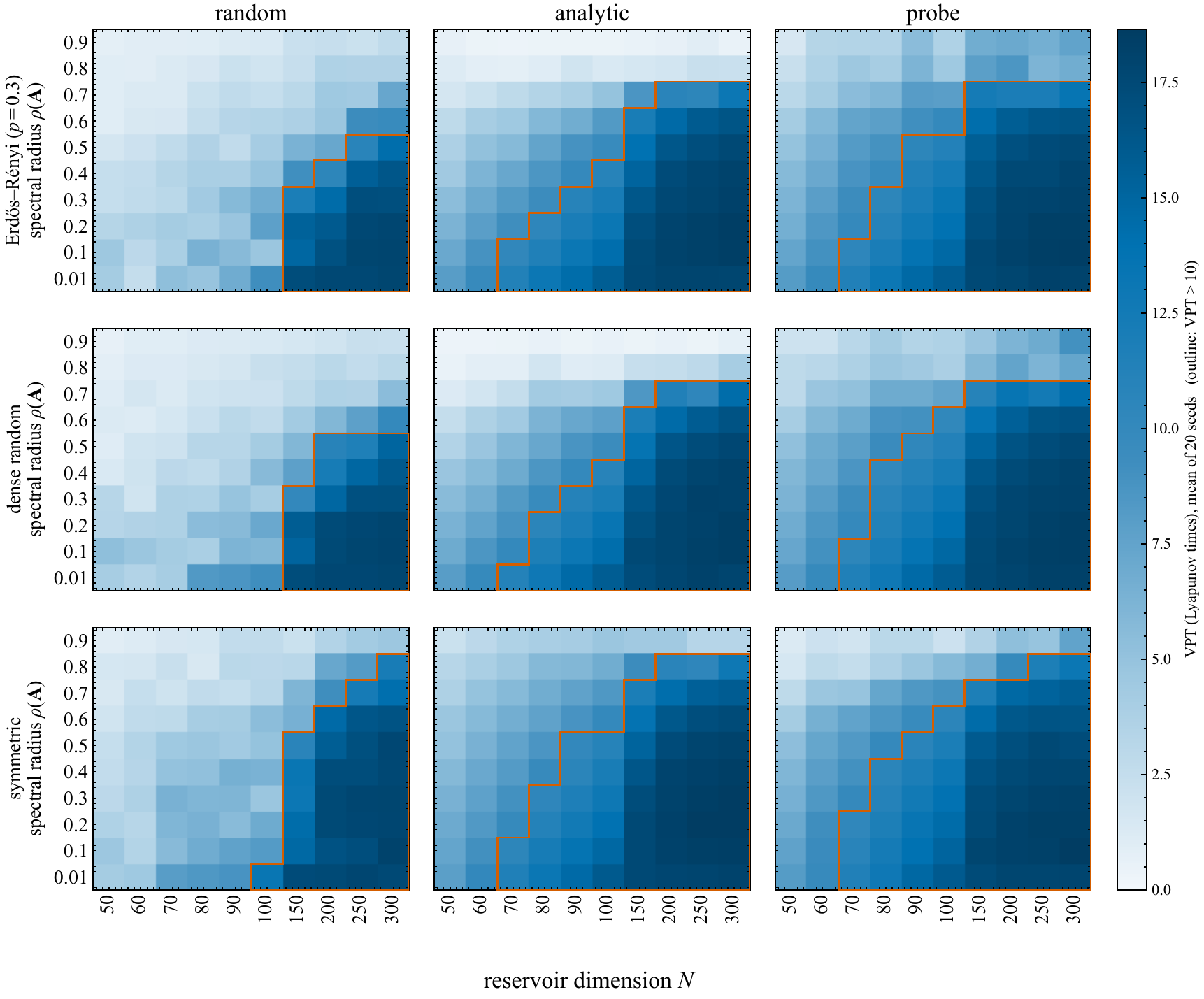}
    \caption{\textbf{The design guidelines hold across the hyperparameter grid.}
    Mean VPT for the Lorenz system (20 reservoir realizations per cell) as a function of the spectral radius \(\rho\) (vertical) and the reservoir dimension \(N\) (horizontal). Rows: reservoir topology; columns: input-layer construction; all nine panels share one color scale.
    The orange outline encloses the cells with mean VPT above \(10\,\lambda_{1}^{-1}\), traced along the sampled cell edges. The outlined region covers \(47\%\) and \(52\%\) of the grid for the analytic and probe-based constructions, against \(25\%\) for the random one.}
    \label{fig:si-grid}
\end{figure}

\begin{figure}[H]
    \centering
    \includegraphics[width=0.99\linewidth,height=\textheight,keepaspectratio]{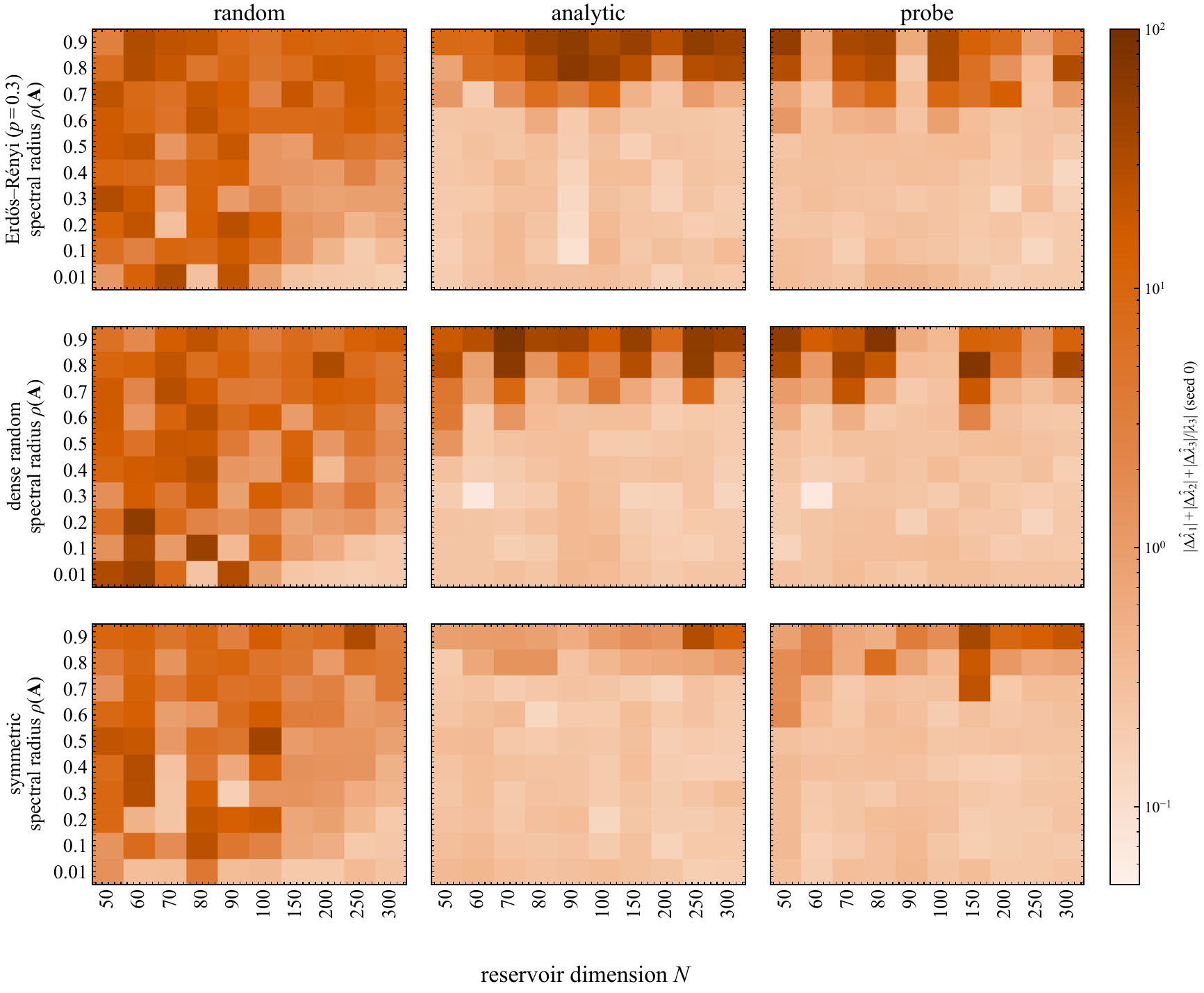}
    \caption{\textbf{The Lyapunov spectrum separates the constructions more sharply than the prediction horizon.}
    Same grid as Supplementary Fig.~\ref{fig:si-grid}, showing the Lyapunov spectrum error (LSE; Methods~\ref{methods-lyapunov}) for one realization, on a logarithmic color scale. The median over each panel is \(0.25\)--\(0.28\) for the proposed constructions against \(2.8\)--\(7.5\) for the random one, and the fraction of the grid with LSE below \(0.5\) is \(0.70\)--\(0.83\) against \(0.09\)--\(0.19\).}
    \label{fig:si-grid-le}
\end{figure}

\subsection{Dependence on the spectral radius}
\label{si-radius-sweep}

In the proposed constructions, the spectral radius \(\rho\) of \(\mathbf{A}\) sets the strength of the fast modes of the closed-loop RC.
That is, it sets the strength of the transverse contraction of the reconstructed attractor.
We therefore examine the dependence on \(\rho\) of the metrics reported in the Results (the distance to the designed subspace, basin stability, domination, VPT, and the Lyapunov spectrum error) in Extended Data Figs.~\ref{fig:ed-sweep-dist}--\ref{fig:ed-sweep-vpt}.

As \(\rho\) grows, the transverse contraction weakens and all metrics eventually degrade.
The distance to the designed subspace grows for every construction, while staying one to two orders of magnitude smaller for the proposed constructions over most of the range (Extended Data Fig.~\ref{fig:ed-sweep-dist}).
Basin stability follows the same pattern (Extended Data Fig.~\ref{fig:ed-sweep-basin}).
The random construction begins losing basins near \(\rho=0.3\) and has lost essentially all of them by \(\rho=0.5\), whereas the median basin of both proposed constructions stays full up to \(\rho=0.5\).
The dominated-splitting test bounds the guaranteed regime (Extended Data Fig.~\ref{fig:ed-sweep-dom}).
Every realization of both proposed constructions is dominated at every sampled point up to \(\rho=0.4\), and domination erodes beyond, whereas the random construction is dominated only at the smallest radius examined.
Prediction tracks these stability margins (Extended Data Fig.~\ref{fig:ed-sweep-vpt}).
The VPT of the random construction declines from moderate \(\rho\) on, while both proposed constructions remain accurate and tightly clustered up to \(\rho=0.6\).
The LSE (Eq.~\eqref{eq:lse}) stays low for the proposed constructions up to \(\rho=0.6\) and grows from the smallest radii for the random one.
At the largest radii the two proposed constructions separate, with the analytic error rising sharply and the probe-based error remaining moderate.

\subsection{Benchmark evaluation details}
\label{si-benchmark-details}

The benchmark of Gilpin evaluates 24 distinct forecasting models, ranging from classical statistical methods to modern deep learning architectures, including a standard RC (Appendix A of ref.~\cite{Gilpin2023-ua}).
Our evaluation (Extended Data Table~\ref{table:dysts}) differs from this reference protocol in a few settings.
Whereas the benchmark evaluates each model once, we averaged the metrics over 10 realizations of the reservoir configuration.
We did not optimize the time-scale parameter (the leaking rate \cite{Jaeger2007-kc}) for each target system, since preliminary experiments showed that the performance of the proposed configuration was relatively insensitive to this parameter.
The excluded system (PanXuZhou) is, at the benchmark's sampling settings, a decaying transient toward a fixed point rather than a sustained chaotic attractor.
Such a simple trajectory is easy to predict and yields an excessively long prediction horizon, so its exclusion does not favor the proposed constructions.

\newpage

\clearpage
\section{Extended Data}
 \label{extended-data}

\renewcommand{\figurename}{Extended Data Fig.}
\renewcommand{\thefigure}{\arabic{figure}}
\setcounter{figure}{0}
\renewcommand{\tablename}{Extended Data Table}
\renewcommand{\thetable}{\arabic{table}}
\setcounter{table}{0}

\begin{table}[h!]
    \centering
    \caption{\textbf{List of hyperparameters and experimental settings.}
        Values denote the specific settings used for the results presented in the corresponding figures and table.
    }\label{table:para}
    \renewcommand{\arraystretch}{1.2}
    \begingroup
    \footnotesize
    \begin{tabular*}{\linewidth}{@{\extracolsep{\fill}}llccc@{}}
        \toprule
        \textbf{Parameter} & \textbf{Symbol} &
        \begin{tabular}[t]{@{}c@{}}\textbf{Figs. \ref{fig:bs}--\ref{fig:vptls}}\end{tabular} &
        \begin{tabular}[t]{@{}c@{}}\textbf{Suppl. Figs. \ref{fig:si-eigsel}--\ref{fig:si-grid-le}}\end{tabular} &
        \textbf{Table \ref{table:dysts}} \\
        \midrule

        Reservoir size & \(N\) & 200 & 200 (S3--S4: 50--300) & 500 \\
        Spectral radius & \(\rho\) & 0.4 (sweeps: 0.01--0.9) & 0.3 (S3--S4: 0.01--0.9) & 0.1 \\
        Reservoir topology & -- & ER & ER & symmetric \\
        Connection probability & \(p_A\) & 0.3 & 0.3 & -- \\
        Regularization & \(\beta\) & 0 & 0 & 0 \\
        Bias vector & \(\mathbf{b}\) & const, \(0.01\) & const, \(0.01\) & const, \(0.01\) \\
        Warmup steps & \(T_{\mathrm{warm}}\) & 1000 & 1000 & 100 \\
        Training steps & \(T_{\mathrm{train}}\) & 2000 & 2000 & 900 \\
        Realizations & -- & 20 & 20 & 10 \\
        Test trials & \(N_{\mathrm{test}}\) & 100 & 100 & 100 \\

        \bottomrule
    \end{tabular*}
    \endgroup
\end{table}

\begin{table}[h!]
    \centering
    \caption{\textbf{Prediction performance across 134 chaotic systems of the \texttt{dysts} benchmark.}
        SMAPE prediction horizon and valid prediction time (VPT), in Lyapunov times (\(\lambda_{1}^{-1}\)), for the reference deep-learning models \cite{Gilpin2023-ua} and for RC with the three input-layer constructions (mean \(\pm\) s.d.\ across systems); all RC results use no regularization (\(\beta=0\)).}\label{table:dysts}

    \begin{tabular*}{\textwidth}{@{\extracolsep{\fill}}lcc@{}}
        \toprule
        \textbf{Model} & \textbf{SMAPE horizon} & \textbf{VPT} \\
        \midrule
        NBEATS \cite{Oreshkin2020-av} / NHiTS \cite{Challu2022-tr} (best of 24 models \cite{Gilpin2023-ua}) & 13.9 \(\pm\) 2.8 & -- \\
        RC, analytic construction & 13.9 \(\pm\) 13.9 & 9.6 \(\pm\) 10.1 \\
        RC, probe-based construction & \textbf{14.1 \(\pm\) 14.0} & \textbf{9.7 \(\pm\) 10.1} \\
        RC, random construction & 6.8 \(\pm\) 10.1 & 4.7 \(\pm\) 7.0 \\
        \bottomrule
    \end{tabular*}
\end{table}

\begin{figure}[H]
    \centering
    \includegraphics[width=0.99\linewidth,height=\textheight,keepaspectratio]{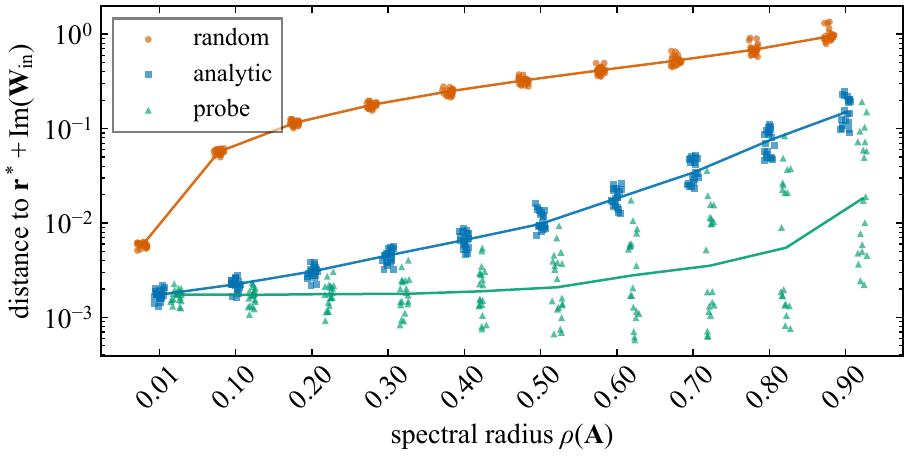}
    \caption{\textbf{Distance to the designed subspace across the spectral radius.} Mean distance from the reconstruction to \(\mathbf{r}^{*}+\mathrm{Im}(\mathbf{W}_{\mathrm{in}})\) as a function of the spectral radius \(\rho\) of \(\mathbf{A}\) (20 reservoir realizations per radius, logarithmic scale). The distance grows with \(\rho\) for every construction, and over most of the range it stays one to two orders of magnitude smaller for the proposed constructions than for the random one. }\label{fig:ed-sweep-dist}
\end{figure}

\begin{figure}[H]
    \centering
    \includegraphics[width=0.99\linewidth,height=\textheight,keepaspectratio]{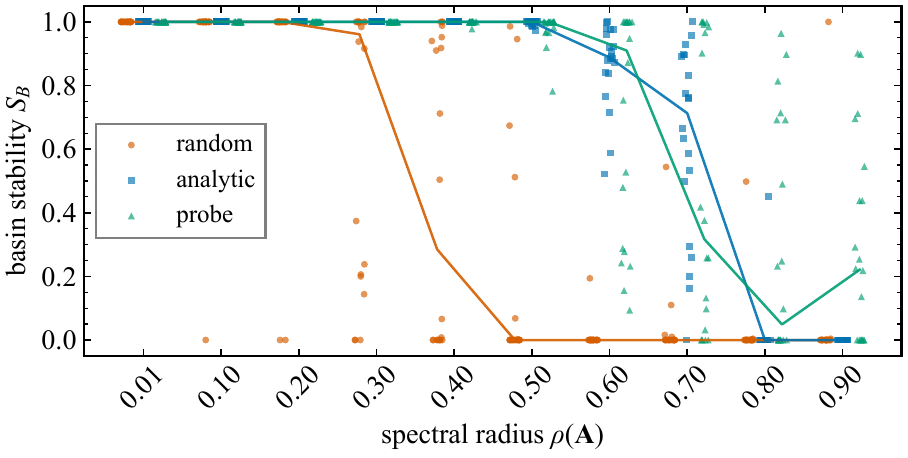}
    \caption{\textbf{Basin stability across the spectral radius.} \(S_B\) as a function of \(\rho\), in the same format as Extended Data Fig.~\ref{fig:ed-sweep-dist}. The random construction degrades from \(\rho\approx0.3\) and loses essentially all trials by \(\rho=0.5\), whereas the median basin of both proposed constructions stays full up to \(\rho=0.5\). }\label{fig:ed-sweep-basin}
\end{figure}

\begin{figure}[H]
    \centering
    \includegraphics[width=0.99\linewidth,height=\textheight,keepaspectratio]{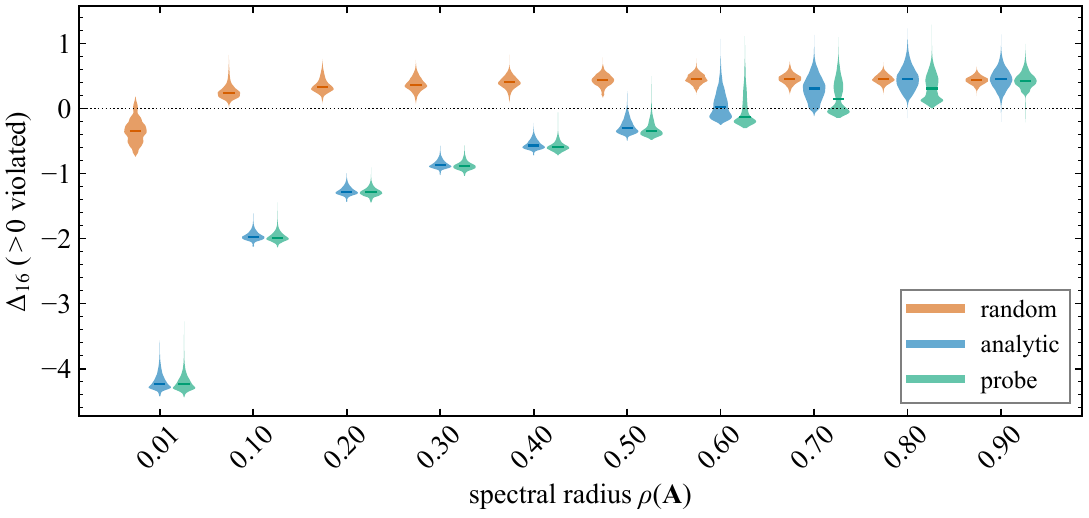}
    \caption{\textbf{Dominated splitting across the spectral radius.} The domination metric \(\Delta_{16}\) along the reconstruction as a function of \(\rho\) (20 reservoir realizations per radius); values above zero violate domination. Up to \(\rho=0.4\), every realization of both proposed constructions is dominated at every sampled point, and domination is gradually lost beyond. The random construction is dominated at every point only at the smallest radius examined. }\label{fig:ed-sweep-dom}
\end{figure}

\begin{figure}[H]
    \centering
    \includegraphics[width=0.99\linewidth,height=\textheight,keepaspectratio]{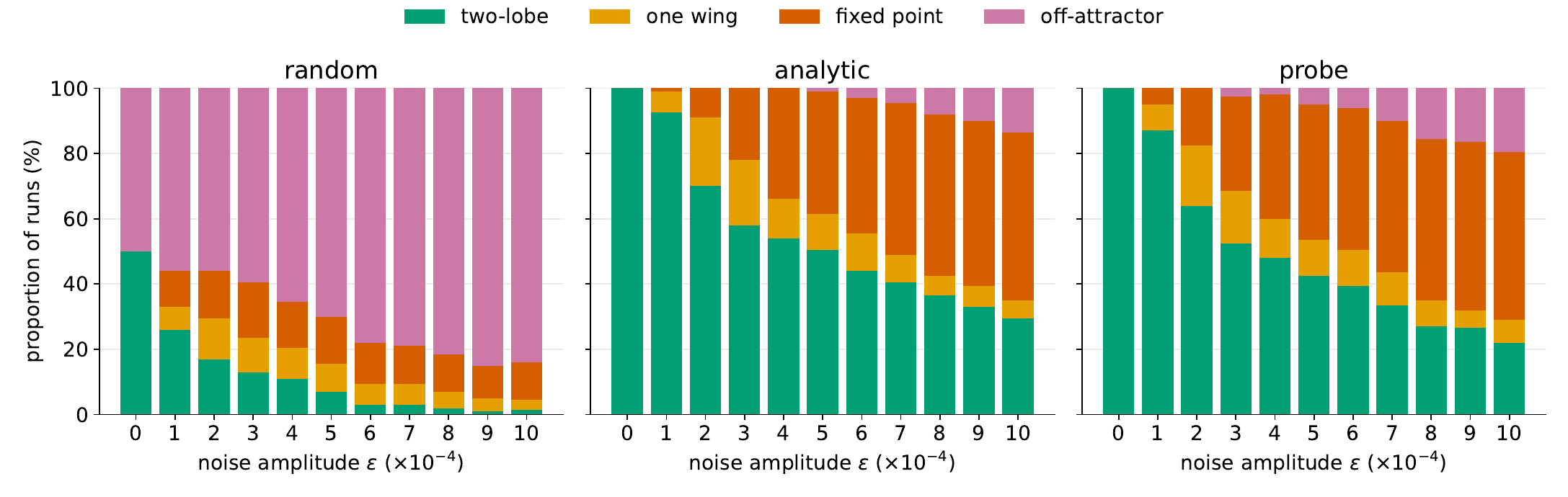}
    \caption{\textbf{Outcome frequencies of the persistence test.} Fraction of runs in each outcome category as a function of the perturbation strength \(\varepsilon\), for the three constructions; each run is classified as two-lobe (the attractor is retained), one wing, fixed point, or off-attractor (Methods~\ref{methods-persistence}). Each bar aggregates 20 reservoir realizations \(\times\) 10 draws of \(\mathbf{W}_{\varepsilon}\), and the leftmost bar is the unperturbed control (20 runs). For the random construction, 10 of the 20 realizations fail already in the control, so its bars mix the perturbation response with baseline failures. }\label{fig:ed-eps-bars}
\end{figure}

\begin{figure}[H]
    \centering
    \includegraphics[width=0.99\linewidth,height=\textheight,keepaspectratio]{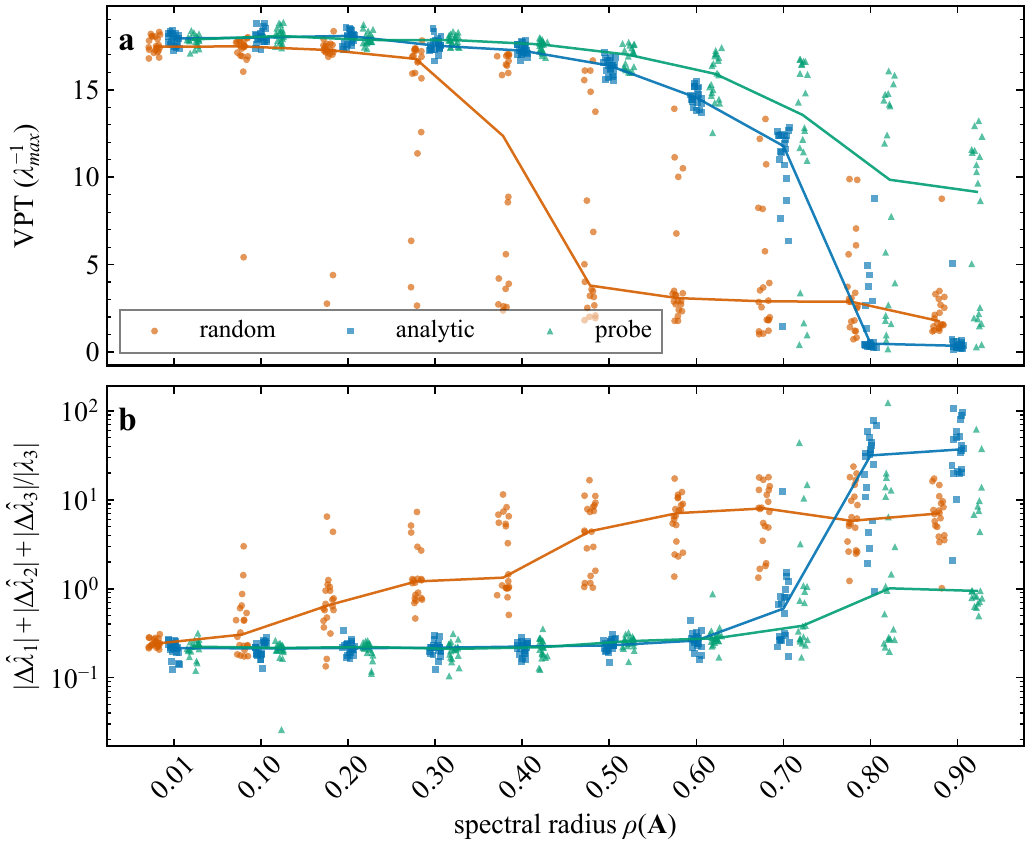}
    \caption{\textbf{Valid prediction time and Lyapunov spectrum error across the spectral radius.} \textbf{a}, VPT as a function of \(\rho\) (20 reservoir realizations per radius). The random construction declines already at moderate \(\rho\) and scatters widely, whereas both proposed constructions remain accurate and tightly clustered up to \(\rho=0.5\). \textbf{b}, Lyapunov spectrum error (LSE; Methods~\ref{methods-lyapunov}) in the same format, on a logarithmic scale. Both proposed constructions hold the error at a low level up to \(\rho=0.6\), whereas the random error grows from the smallest radii. At larger radii the two proposed constructions separate, the analytic error rising sharply while the probe-based error remains moderate. }\label{fig:ed-sweep-vpt}
\end{figure}

\begin{figure}[H]
    \centering
    \includegraphics[width=0.99\linewidth,height=\textheight,keepaspectratio]{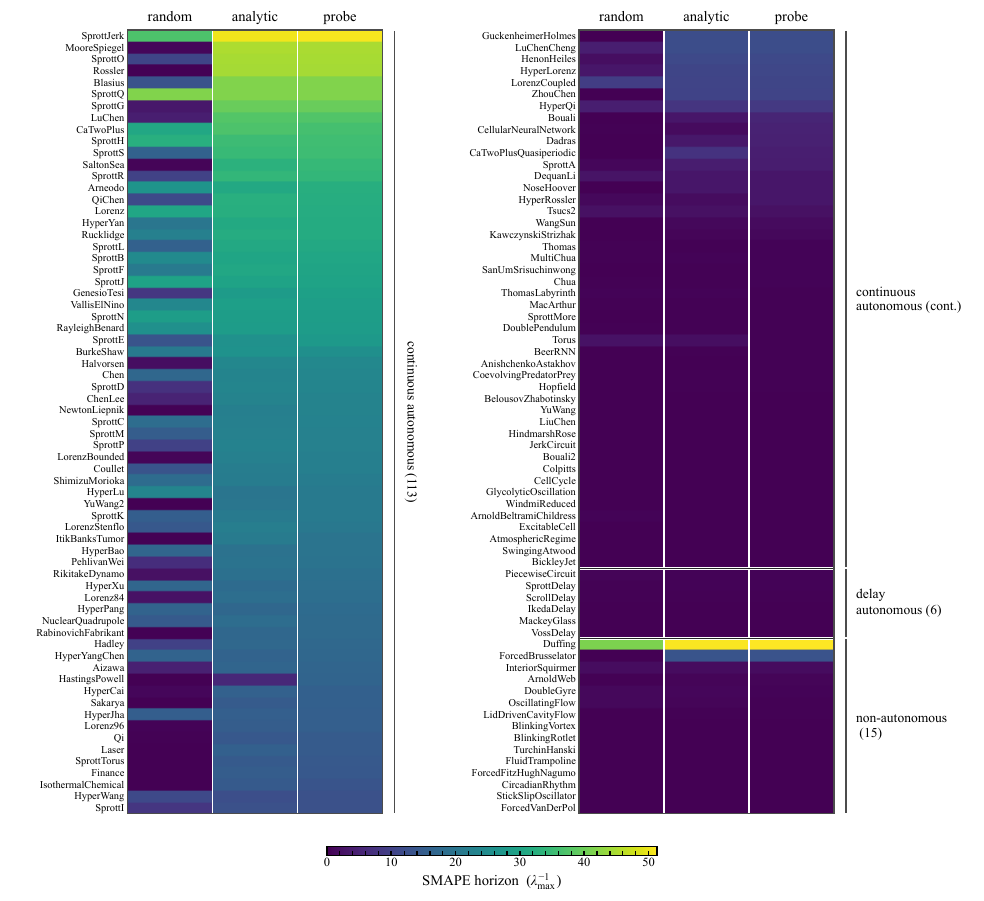}
    \caption{\textbf{SMAPE prediction horizon for each of the 134 benchmark systems.} SMAPE prediction horizon of each system, in units of \(\lambda_{1}^{-1}\) (color scale), for the three input-layer constructions (columns: random, analytic, and probe-based), averaged over 10 reservoir realizations per system. Rows are grouped by system class as in Extended Data Fig.~\ref{fig:si-heatmap} and sorted within each class by the SMAPE horizon of the probe-based construction. }
    \label{fig:ed-benchmark-smape}
\end{figure}

\begin{figure}[H]
    \centering
    \includegraphics[width=0.99\linewidth,height=\textheight,keepaspectratio]{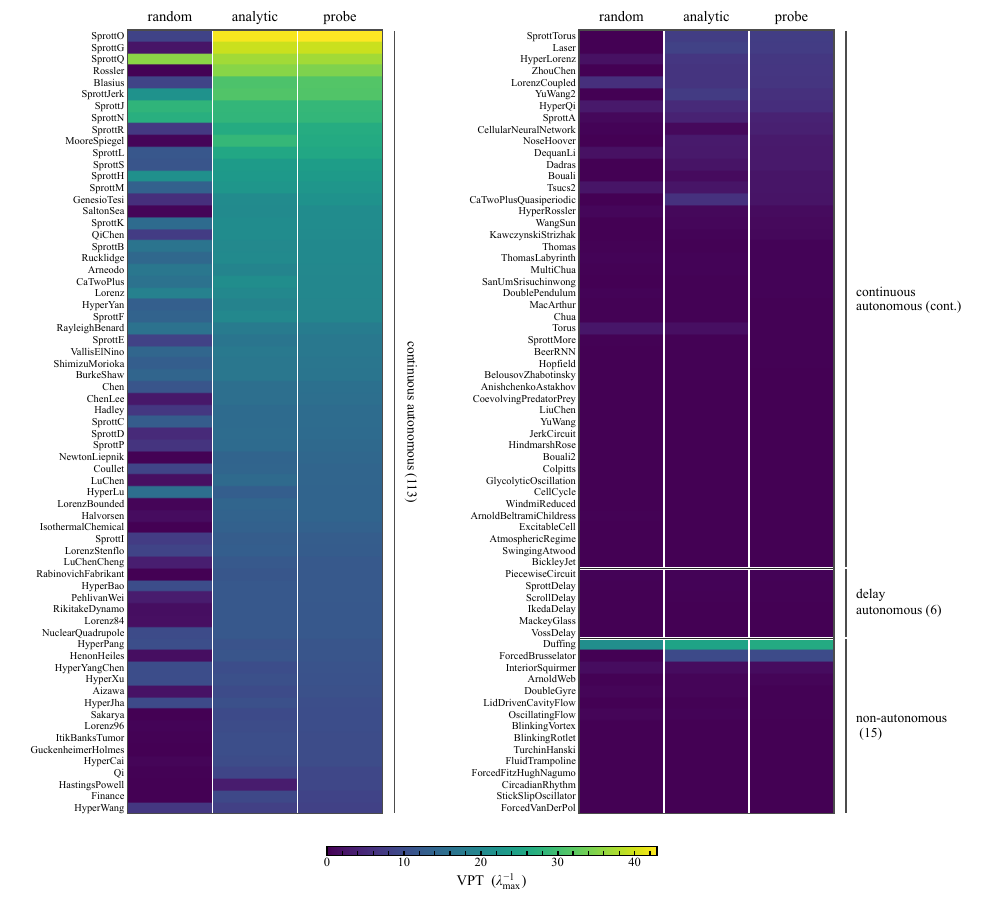}
    \caption{\textbf{Valid prediction time for each of the 134 benchmark systems.}
    VPT of each system, in units of \(\lambda_{1}^{-1}\) (color scale), for the three input-layer constructions (columns: random, analytic, and probe-based), averaged over 10 reservoir realizations per system. Rows are grouped by system class---113 continuous autonomous (continuing in the right panel), 6 delayed autonomous, and 15 non-autonomous---and sorted within each class by the VPT of the probe-based construction. Prediction performance splits markedly along this classification. Many continuous autonomous systems are predicted well, all delayed autonomous systems fail for every construction, and some non-autonomous systems are predicted well.}
    \label{fig:si-heatmap}
\end{figure}

\begin{figure}[H]
    \centering
    \includegraphics[width=0.99\linewidth,height=\textheight,keepaspectratio]{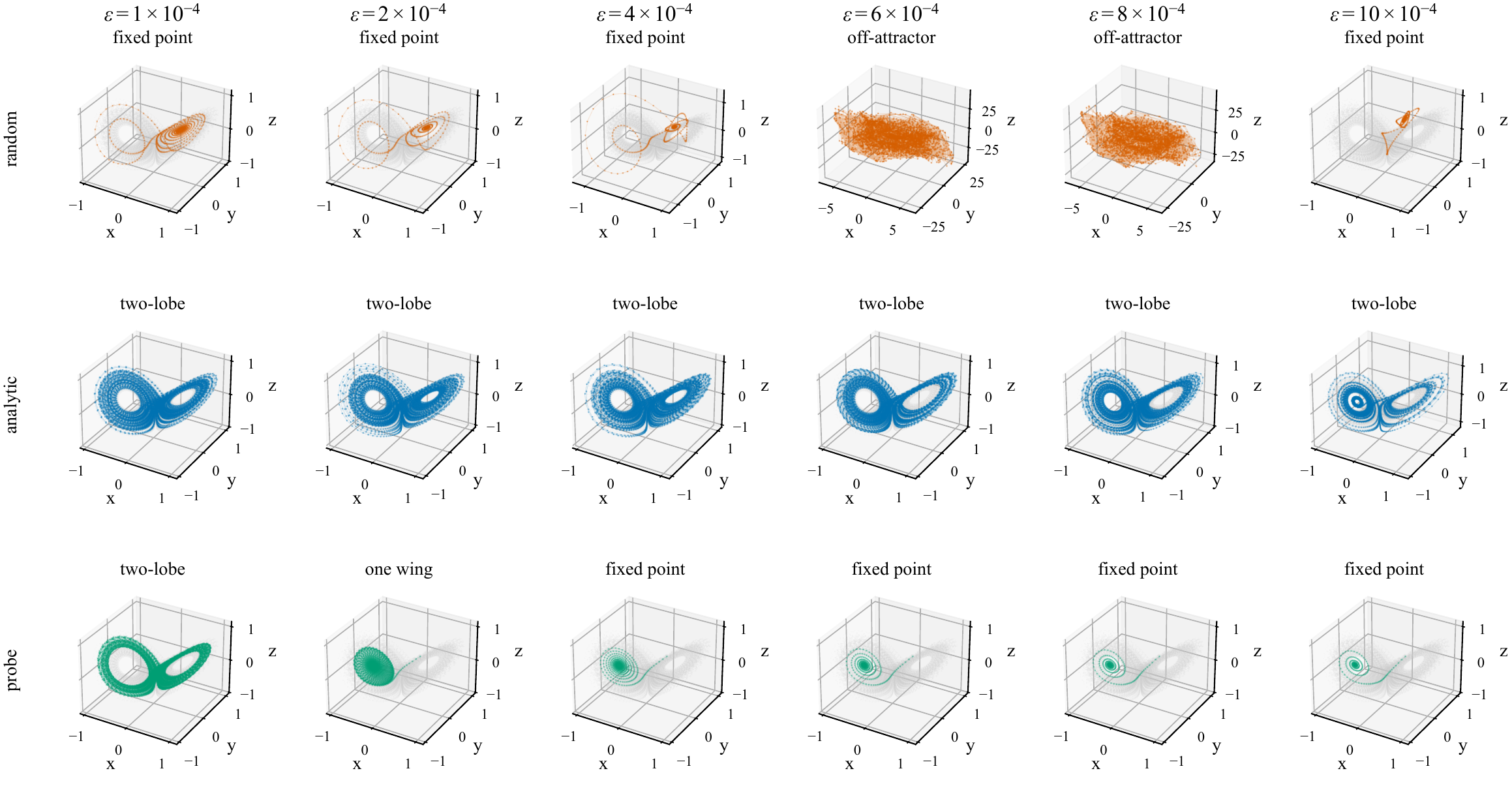}
    \caption{\textbf{Perturbed reconstructions across \(\varepsilon\) for one configuration.} The ladder shows one fixed reservoir realization with one draw of \(\mathbf{W}_{\varepsilon}\) at \(\varepsilon\in\{1,2,4,6,8,10\}\times10^{-4}\). How the attractor breaks down as \(\varepsilon\) grows differs across realizations and draws, and the ensemble trend is given by Extended Data Fig.~\ref{fig:ed-eps-bars}. }\label{fig:ed-eps-ladder}
\end{figure}

\bibliography{paperpile}


\begin{thebibliography}{51}
\ifx \bisbn   \undefined \def \bisbn  #1{ISBN #1}\fi
\ifx \binits  \undefined \def \binits#1{#1}\fi
\ifx \bauthor  \undefined \def \bauthor#1{#1}\fi
\ifx \batitle  \undefined \def \batitle#1{#1}\fi
\ifx \bjtitle  \undefined \def \bjtitle#1{#1}\fi
\ifx \bvolume  \undefined \def \bvolume#1{\textbf{#1}}\fi
\ifx \byear  \undefined \def \byear#1{#1}\fi
\ifx \bissue  \undefined \def \bissue#1{#1}\fi
\ifx \bfpage  \undefined \def \bfpage#1{#1}\fi
\ifx \blpage  \undefined \def \blpage #1{#1}\fi
\ifx \burl  \undefined \def \burl#1{\textsf{#1}}\fi
\ifx \doiurl  \undefined \def \doiurl#1{\url{https://doi.org/#1}}\fi
\ifx \betal  \undefined \def \betal{\textit{et al.}}\fi
\ifx \binstitute  \undefined \def \binstitute#1{#1}\fi
\ifx \binstitutionaled  \undefined \def \binstitutionaled#1{#1}\fi
\ifx \bctitle  \undefined \def \bctitle#1{#1}\fi
\ifx \beditor  \undefined \def \beditor#1{#1}\fi
\ifx \bpublisher  \undefined \def \bpublisher#1{#1}\fi
\ifx \bbtitle  \undefined \def \bbtitle#1{#1}\fi
\ifx \bedition  \undefined \def \bedition#1{#1}\fi
\ifx \bseriesno  \undefined \def \bseriesno#1{#1}\fi
\ifx \blocation  \undefined \def \blocation#1{#1}\fi
\ifx \bsertitle  \undefined \def \bsertitle#1{#1}\fi
\ifx \bsnm \undefined \def \bsnm#1{#1}\fi
\ifx \bsuffix \undefined \def \bsuffix#1{#1}\fi
\ifx \bparticle \undefined \def \bparticle#1{#1}\fi
\ifx \barticle \undefined \def \barticle#1{#1}\fi
\bibcommenthead
\ifx \bconfdate \undefined \def \bconfdate #1{#1}\fi
\ifx \botherref \undefined \def \botherref #1{#1}\fi
\ifx \url \undefined \def \url#1{\textsf{#1}}\fi
\ifx \bchapter \undefined \def \bchapter#1{#1}\fi
\ifx \bbook \undefined \def \bbook#1{#1}\fi
\ifx \bcomment \undefined \def \bcomment#1{#1}\fi
\ifx \oauthor \undefined \def \oauthor#1{#1}\fi
\ifx \citeauthoryear \undefined \def \citeauthoryear#1{#1}\fi
\ifx \endbibitem  \undefined \def \endbibitem {}\fi
\ifx \bconflocation  \undefined \def \bconflocation#1{#1}\fi
\ifx \arxivurl  \undefined \def \arxivurl#1{\textsf{#1}}\fi
\csname PreBibitemsHook\endcsname

\bibitem[\protect\citeauthoryear{Ghil and Lucarini}{2020}]{Ghil2020-jw}
\begin{barticle}
\bauthor{\bsnm{Ghil}, \binits{M.}},
\bauthor{\bsnm{Lucarini}, \binits{V.}}:
\batitle{The physics of climate variability and climate change}.
\bjtitle{Rev. Mod. Phys.}
\bvolume{92}(\bissue{3}),
\bfpage{035002}
(\byear{2020})
\end{barticle}
\endbibitem

\bibitem[\protect\citeauthoryear{Vyas et~al.}{2020}]{Vyas2020-ie}
\begin{barticle}
\bauthor{\bsnm{Vyas}, \binits{S.}},
\bauthor{\bsnm{Golub}, \binits{M.D.}},
\bauthor{\bsnm{Sussillo}, \binits{D.}},
\bauthor{\bsnm{Shenoy}, \binits{K.V.}}:
\batitle{Computation through neural population dynamics}.
\bjtitle{Annu. Rev. Neurosci.}
\bvolume{43}(\bissue{1}),
\bfpage{249}--\blpage{275}
(\byear{2020})
\end{barticle}
\endbibitem

\bibitem[\protect\citeauthoryear{Takens}{1981}]{Takens1981-mw}
\begin{bchapter}
\bauthor{\bsnm{Takens}, \binits{F.}}:
\bctitle{Detecting strange attractors in turbulence}.
In: \beditor{\bsnm{Rand}, \binits{D.}},
\beditor{\bsnm{Young}, \binits{L.-S.}} (eds.)
\bbtitle{Dynamical Systems and Turbulence, Warwick 1980},
pp. \bfpage{366}--\blpage{381}.
\bpublisher{Springer},
\blocation{Berlin, Heidelberg}
(\byear{1981})
\end{bchapter}
\endbibitem

\bibitem[\protect\citeauthoryear{Sauer et~al.}{1991}]{Sauer1991-ew}
\begin{barticle}
\bauthor{\bsnm{Sauer}, \binits{T.}},
\bauthor{\bsnm{Yorke}, \binits{J.A.}},
\bauthor{\bsnm{Casdagli}, \binits{M.}}:
\batitle{Embedology}.
\bjtitle{J. Stat. Phys.}
\bvolume{65}(\bissue{3}),
\bfpage{579}--\blpage{616}
(\byear{1991})
\end{barticle}
\endbibitem

\bibitem[\protect\citeauthoryear{Mikhaeil et~al.}{2021}]{Mikhaeil2021-ve}
\begin{barticle}
\bauthor{\bsnm{Mikhaeil}, \binits{J.M.}},
\bauthor{\bsnm{Monfared}, \binits{Z.}},
\bauthor{\bsnm{Durstewitz}, \binits{D.}}:
\batitle{On the difficulty of learning chaotic dynamics with {RNNs}}.
\bjtitle{Neural Inf Process Syst}
\bvolume{35},
\bfpage{11297}--\blpage{11312}
(\byear{2021})
\end{barticle}
\endbibitem

\bibitem[\protect\citeauthoryear{Jaeger}{2001}]{Jaeger2001-vs}
\begin{botherref}
\oauthor{\bsnm{Jaeger}, \binits{H.}}:
The ``echo state'' approach to analysing and training recurrent neural
  networks.
GMD Technical Report
(148)
(2001)
\end{botherref}
\endbibitem

\bibitem[\protect\citeauthoryear{Maass et~al.}{2002}]{Maass2002-me}
\begin{barticle}
\bauthor{\bsnm{Maass}, \binits{W.}},
\bauthor{\bsnm{Natschläger}, \binits{T.}},
\bauthor{\bsnm{Markram}, \binits{H.}}:
\batitle{Real-time computing without stable states: a new framework for neural
  computation based on perturbations}.
\bjtitle{Neural Comput.}
\bvolume{14}(\bissue{11}),
\bfpage{2531}--\blpage{2560}
(\byear{2002})
\end{barticle}
\endbibitem

\bibitem[\protect\citeauthoryear{Tanaka et~al.}{2019}]{Tanaka2019-vn}
\begin{barticle}
\bauthor{\bsnm{Tanaka}, \binits{G.}},
\bauthor{\bsnm{Yamane}, \binits{T.}},
\bauthor{\bsnm{Héroux}, \binits{J.B.}},
\bauthor{\bsnm{Nakane}, \binits{R.}},
\bauthor{\bsnm{Kanazawa}, \binits{N.}},
\bauthor{\bsnm{Takeda}, \binits{S.}},
\bauthor{\bsnm{Numata}, \binits{H.}},
\bauthor{\bsnm{Nakano}, \binits{D.}},
\bauthor{\bsnm{Hirose}, \binits{A.}}:
\batitle{Recent advances in physical reservoir computing: A review}.
\bjtitle{Neural Netw.}
\bvolume{115},
\bfpage{100}--\blpage{123}
(\byear{2019})
\end{barticle}
\endbibitem

\bibitem[\protect\citeauthoryear{Pathak et~al.}{2017}]{Pathak2017-id}
\begin{barticle}
\bauthor{\bsnm{Pathak}, \binits{J.}},
\bauthor{\bsnm{Lu}, \binits{Z.}},
\bauthor{\bsnm{Hunt}, \binits{B.R.}},
\bauthor{\bsnm{Girvan}, \binits{M.}},
\bauthor{\bsnm{Ott}, \binits{E.}}:
\batitle{Using machine learning to replicate chaotic attractors and calculate
  lyapunov exponents from data}.
\bjtitle{Chaos}
\bvolume{27}(\bissue{12}),
\bfpage{121102}
(\byear{2017})
\end{barticle}
\endbibitem

\bibitem[\protect\citeauthoryear{Lu et~al.}{2018}]{Lu2018-vk}
\begin{barticle}
\bauthor{\bsnm{Lu}, \binits{Z.}},
\bauthor{\bsnm{Hunt}, \binits{B.R.}},
\bauthor{\bsnm{Ott}, \binits{E.}}:
\batitle{Attractor reconstruction by machine learning}.
\bjtitle{Chaos}
\bvolume{28}(\bissue{6}),
\bfpage{061104}
(\byear{2018})
\end{barticle}
\endbibitem

\bibitem[\protect\citeauthoryear{Hart et~al.}{2020}]{Hart2020-kx}
\begin{barticle}
\bauthor{\bsnm{Hart}, \binits{A.}},
\bauthor{\bsnm{Hook}, \binits{J.}},
\bauthor{\bsnm{Dawes}, \binits{J.}}:
\batitle{Embedding and approximation theorems for echo state networks}.
\bjtitle{Neural Netw.}
\bvolume{128},
\bfpage{234}--\blpage{247}
(\byear{2020})
\end{barticle}
\endbibitem

\bibitem[\protect\citeauthoryear{Grigoryeva et~al.}{2021}]{Grigoryeva2021-mg}
\begin{barticle}
\bauthor{\bsnm{Grigoryeva}, \binits{L.}},
\bauthor{\bsnm{Hart}, \binits{A.}},
\bauthor{\bsnm{Ortega}, \binits{J.-P.}}:
\batitle{Chaos on compact manifolds: Differentiable synchronizations beyond the
  takens theorem}.
\bjtitle{Phys Rev E}
\bvolume{103}(\bissue{6-1}),
\bfpage{062204}
(\byear{2021})
\end{barticle}
\endbibitem

\bibitem[\protect\citeauthoryear{Margazoglou and
  Magri}{2023}]{Margazoglou2023-lc}
\begin{barticle}
\bauthor{\bsnm{Margazoglou}, \binits{G.}},
\bauthor{\bsnm{Magri}, \binits{L.}}:
\batitle{Stability analysis of chaotic systems from data}.
\bjtitle{Nonlinear Dyn.}
\bvolume{111}(\bissue{9}),
\bfpage{8799}--\blpage{8819}
(\byear{2023})
\end{barticle}
\endbibitem

\bibitem[\protect\citeauthoryear{Sisodia and Jalan}{2024}]{Sisodia2024-fl}
\begin{barticle}
\bauthor{\bsnm{Sisodia}, \binits{D.}},
\bauthor{\bsnm{Jalan}, \binits{S.}}:
\batitle{Dynamical analysis of a parameter-aware reservoir computer}.
\bjtitle{Phys. Rev. E.}
\bvolume{110}(\bissue{3-1}),
\bfpage{034211}
(\byear{2024})
\end{barticle}
\endbibitem

\bibitem[\protect\citeauthoryear{Bi et~al.}{2023}]{Bi2023-jr}
\begin{barticle}
\bauthor{\bsnm{Bi}, \binits{K.}},
\bauthor{\bsnm{Xie}, \binits{L.}},
\bauthor{\bsnm{Zhang}, \binits{H.}},
\bauthor{\bsnm{Chen}, \binits{X.}},
\bauthor{\bsnm{Gu}, \binits{X.}},
\bauthor{\bsnm{Tian}, \binits{Q.}}:
\batitle{Accurate medium-range global weather forecasting with {3D} neural
  networks}.
\bjtitle{Nature}
\bvolume{619}(\bissue{7970}),
\bfpage{533}--\blpage{538}
(\byear{2023})
\end{barticle}
\endbibitem

\bibitem[\protect\citeauthoryear{Kochkov et~al.}{2024}]{Kochkov2024-ak}
\begin{barticle}
\bauthor{\bsnm{Kochkov}, \binits{D.}},
\bauthor{\bsnm{Yuval}, \binits{J.}},
\bauthor{\bsnm{Langmore}, \binits{I.}},
\bauthor{\bsnm{Norgaard}, \binits{P.}},
\bauthor{\bsnm{Smith}, \binits{J.}},
\bauthor{\bsnm{Mooers}, \binits{G.}},
\bauthor{\bsnm{Klöwer}, \binits{M.}},
\bauthor{\bsnm{Lottes}, \binits{J.}},
\bauthor{\bsnm{Rasp}, \binits{S.}},
\bauthor{\bsnm{Düben}, \binits{P.}},
\bauthor{\bsnm{Hatfield}, \binits{S.}},
\bauthor{\bsnm{Battaglia}, \binits{P.}},
\bauthor{\bsnm{Sanchez-Gonzalez}, \binits{A.}},
\bauthor{\bsnm{Willson}, \binits{M.}},
\bauthor{\bsnm{Brenner}, \binits{M.P.}},
\bauthor{\bsnm{Hoyer}, \binits{S.}}:
\batitle{Neural general circulation models for weather and climate}.
\bjtitle{Nature}
\bvolume{632}(\bissue{8027}),
\bfpage{1060}--\blpage{1066}
(\byear{2024})
\end{barticle}
\endbibitem

\bibitem[\protect\citeauthoryear{Platt et~al.}{2023}]{Platt2023-lh}
\begin{barticle}
\bauthor{\bsnm{Platt}, \binits{J.A.}},
\bauthor{\bsnm{Penny}, \binits{S.G.}},
\bauthor{\bsnm{Smith}, \binits{T.A.}},
\bauthor{\bsnm{Chen}, \binits{T.-C.}},
\bauthor{\bsnm{Abarbanel}, \binits{H.D.I.}}:
\batitle{Constraining chaos: Enforcing dynamical invariants in the training of
  reservoir computers}.
\bjtitle{Chaos}
\bvolume{33}(\bissue{10}),
\bfpage{103107}
(\byear{2023})
\end{barticle}
\endbibitem

\bibitem[\protect\citeauthoryear{Lorenz}{1963}]{Lorenz1963-kf}
\begin{barticle}
\bauthor{\bsnm{Lorenz}, \binits{E.N.}}:
\batitle{Deterministic nonperiodic flow}.
\bjtitle{J. Atmos. Sci.}
\bvolume{20}(\bissue{2}),
\bfpage{130}--\blpage{141}
(\byear{1963})
\end{barticle}
\endbibitem

\bibitem[\protect\citeauthoryear{Menck et~al.}{2013}]{Menck2013-ex}
\begin{barticle}
\bauthor{\bsnm{Menck}, \binits{P.J.}},
\bauthor{\bsnm{Heitzig}, \binits{J.}},
\bauthor{\bsnm{Marwan}, \binits{N.}},
\bauthor{\bsnm{Kurths}, \binits{J.}}:
\batitle{How basin stability complements the linear-stability paradigm}.
\bjtitle{Nat. Phys.}
\bvolume{9}(\bissue{2}),
\bfpage{89}--\blpage{92}
(\byear{2013})
\end{barticle}
\endbibitem

\bibitem[\protect\citeauthoryear{Fenichel}{1979}]{Fenichel1979-ej}
\begin{barticle}
\bauthor{\bsnm{Fenichel}, \binits{N.}}:
\batitle{Geometric singular perturbation theory for ordinary differential
  equations}.
\bjtitle{J. Differ. Equ.}
\bvolume{31}(\bissue{1}),
\bfpage{53}--\blpage{98}
(\byear{1979})
\end{barticle}
\endbibitem

\bibitem[\protect\citeauthoryear{Hirsch et~al.}{1977}]{Hirsch1977-ry}
\begin{bbook}
\bauthor{\bsnm{Hirsch}, \binits{M.W.}},
\bauthor{\bsnm{Pugh}, \binits{C.C.}},
\bauthor{\bsnm{Shub}, \binits{M.}}:
\bbtitle{Invariant Manifolds},
\bedition{1977} edn.
\bsertitle{Lecture notes in mathematics}.
\bpublisher{Springer},
\blocation{Berlin, Germany}
(\byear{1977})
\end{bbook}
\endbibitem

\bibitem[\protect\citeauthoryear{Vlachas et~al.}{2020}]{Vlachas2020-ob}
\begin{barticle}
\bauthor{\bsnm{Vlachas}, \binits{P.R.}},
\bauthor{\bsnm{Pathak}, \binits{J.}},
\bauthor{\bsnm{Hunt}, \binits{B.R.}},
\bauthor{\bsnm{Sapsis}, \binits{T.P.}},
\bauthor{\bsnm{Girvan}, \binits{M.}},
\bauthor{\bsnm{Ott}, \binits{E.}},
\bauthor{\bsnm{Koumoutsakos}, \binits{P.}}:
\batitle{Backpropagation algorithms and reservoir computing in recurrent neural
  networks for the forecasting of complex spatiotemporal dynamics}.
\bjtitle{Neural Netw.}
\bvolume{126},
\bfpage{191}--\blpage{217}
(\byear{2020})
\end{barticle}
\endbibitem

\bibitem[\protect\citeauthoryear{Platt et~al.}{2021}]{Platt2021-el}
\begin{barticle}
\bauthor{\bsnm{Platt}, \binits{J.A.}},
\bauthor{\bsnm{Wong}, \binits{A.}},
\bauthor{\bsnm{Clark}, \binits{R.}},
\bauthor{\bsnm{Penny}, \binits{S.G.}},
\bauthor{\bsnm{Abarbanel}, \binits{H.D.I.}}:
\batitle{Robust forecasting using predictive generalized synchronization in
  reservoir computing}.
\bjtitle{Chaos}
\bvolume{31}(\bissue{12}),
\bfpage{123118}
(\byear{2021})
\end{barticle}
\endbibitem

\bibitem[\protect\citeauthoryear{Bollt}{2021}]{Bollt2021-mj}
\begin{barticle}
\bauthor{\bsnm{Bollt}, \binits{E.}}:
\batitle{On explaining the surprising success of reservoir computing forecaster
  of chaos? the universal machine learning dynamical system with contrast to
  {VAR} and {DMD}}.
\bjtitle{Chaos}
\bvolume{31}(\bissue{1}),
\bfpage{013108}
(\byear{2021})
\end{barticle}
\endbibitem

\bibitem[\protect\citeauthoryear{Hurley and Shaheen}{2025}]{Hurley2025-ue}
\begin{botherref}
\oauthor{\bsnm{Hurley}, \binits{L.A.}},
\oauthor{\bsnm{Shaheen}, \binits{S.E.}}:
Reservoir computing with large valid prediction time for the lorenz system.
arXiv [cs.NE]
(2025)
{[cs.NE]}
\end{botherref}
\endbibitem

\bibitem[\protect\citeauthoryear{Sauer et~al.}{1998}]{Sauer1998-la}
\begin{barticle}
\bauthor{\bsnm{Sauer}, \binits{T.D.}},
\bauthor{\bsnm{Tempkin}, \binits{J.A.}},
\bauthor{\bsnm{Yorke}, \binits{J.A.}}:
\batitle{Spurious lyapunov exponents in attractor reconstruction}.
\bjtitle{Phys. Rev. Lett.}
\bvolume{81}(\bissue{20}),
\bfpage{4341}--\blpage{4344}
(\byear{1998})
\end{barticle}
\endbibitem

\bibitem[\protect\citeauthoryear{Kaplan and Yorke}{1979}]{Kaplan1979-jy}
\begin{bchapter}
\bauthor{\bsnm{Kaplan}, \binits{J.L.}},
\bauthor{\bsnm{Yorke}, \binits{J.A.}}:
\bctitle{Chaotic behavior of multidimensional difference equations}.
In: \bbtitle{Functional Differential Equations and Approximation of Fixed
  Points}.
\bsertitle{Lecture notes in mathematics},
pp. \bfpage{204}--\blpage{227}.
\bpublisher{Springer},
\blocation{Berlin, Heidelberg}
(\byear{1979})
\end{bchapter}
\endbibitem

\bibitem[\protect\citeauthoryear{Gilpin}{2021}]{Gilpin2021-wp}
\begin{botherref}
\oauthor{\bsnm{Gilpin}, \binits{W.}}:
Chaos as an interpretable benchmark for forecasting and data-driven modelling.
arXiv [cs.LG]
(2021)
{[cs.LG]}
\end{botherref}
\endbibitem

\bibitem[\protect\citeauthoryear{Gilpin}{2023}]{Gilpin2023-ua}
\begin{barticle}
\bauthor{\bsnm{Gilpin}, \binits{W.}}:
\batitle{Model scale versus domain knowledge in statistical forecasting of
  chaotic systems}.
\bjtitle{Phys. Rev. Res.}
\bvolume{5}(\bissue{4}),
\bfpage{043252}
(\byear{2023})
\end{barticle}
\endbibitem

\bibitem[\protect\citeauthoryear{Oreshkin et~al.}{2020}]{Oreshkin2020-av}
\begin{bchapter}
\bauthor{\bsnm{Oreshkin}, \binits{B.N.}},
\bauthor{\bsnm{Carpov}, \binits{D.}},
\bauthor{\bsnm{Chapados}, \binits{N.}},
\bauthor{\bsnm{Bengio}, \binits{Y.}}:
\bctitle{{N}-{BEATS}: Neural basis expansion analysis for interpretable time
  series forecasting}.
In: \bbtitle{International Conference on Learning Representations}
(\byear{2020})
\end{bchapter}
\endbibitem

\bibitem[\protect\citeauthoryear{Challu et~al.}{2022}]{Challu2022-tr}
\begin{botherref}
\oauthor{\bsnm{Challu}, \binits{C.}},
\oauthor{\bsnm{Olivares}, \binits{K.G.}},
\oauthor{\bsnm{Oreshkin}, \binits{B.N.}},
\oauthor{\bsnm{Garza}, \binits{F.}},
\oauthor{\bsnm{Mergenthaler-Canseco}, \binits{M.}},
\oauthor{\bsnm{Dubrawski}, \binits{A.}}:
{N}-{HiTS}: Neural hierarchical interpolation for time series forecasting.
National Conference on Artificial Intelligence
\textbf{abs/2201.12886}
(2022)
\end{botherref}
\endbibitem

\bibitem[\protect\citeauthoryear{Kuehn}{2015}]{Kuehn2015-pl}
\begin{bbook}
\bauthor{\bsnm{Kuehn}, \binits{C.}}:
\bbtitle{Multiple Time Scale Dynamics},
\bedition{2015} edn.
\bsertitle{Applied Mathematical Sciences}.
\bpublisher{Springer},
\blocation{Cham, Switzerland}
(\byear{2015})
\end{bbook}
\endbibitem

\bibitem[\protect\citeauthoryear{Kailath}{1980}]{Kailath1980-ou}
\begin{botherref}
\oauthor{\bsnm{Kailath}, \binits{T.}}:
Linear systems
\textbf{156}
(1980)
\end{botherref}
\endbibitem

\bibitem[\protect\citeauthoryear{Grigoryeva et~al.}{2023}]{Grigoryeva2023-cw}
\begin{barticle}
\bauthor{\bsnm{Grigoryeva}, \binits{L.}},
\bauthor{\bsnm{Hart}, \binits{A.}},
\bauthor{\bsnm{Ortega}, \binits{J.-P.}}:
\batitle{Learning strange attractors with reservoir systems}.
\bjtitle{Nonlinearity}
\bvolume{36}(\bissue{9}),
\bfpage{4674}
(\byear{2023})
\end{barticle}
\endbibitem

\bibitem[\protect\citeauthoryear{Liu et~al.}{2011}]{Liu2011-qs}
\begin{barticle}
\bauthor{\bsnm{Liu}, \binits{Y.-Y.}},
\bauthor{\bsnm{Slotine}, \binits{J.-J.}},
\bauthor{\bsnm{Barabási}, \binits{A.-L.}}:
\batitle{Controllability of complex networks}.
\bjtitle{Nature}
\bvolume{473}(\bissue{7346}),
\bfpage{167}--\blpage{173}
(\byear{2011})
\end{barticle}
\endbibitem

\bibitem[\protect\citeauthoryear{Gu et~al.}{2021}]{Gu2021-uh}
\begin{botherref}
\oauthor{\bsnm{Gu}, \binits{A.}},
\oauthor{\bsnm{Goel}, \binits{K.}},
\oauthor{\bsnm{Ré}, \binits{C.}}:
Efficiently modeling long sequences with structured state spaces.
arXiv [cs.LG]
(2021)
{[cs.LG]}
\end{botherref}
\endbibitem

\bibitem[\protect\citeauthoryear{Gu and Dao}{2023}]{Gu2023-xy}
\begin{botherref}
\oauthor{\bsnm{Gu}, \binits{A.}},
\oauthor{\bsnm{Dao}, \binits{T.}}:
Mamba: Linear-time sequence modeling with selective state spaces.
arXiv [cs.LG]
(2023)
{[cs.LG]}
\end{botherref}
\endbibitem

\bibitem[\protect\citeauthoryear{Brunton et~al.}{2016}]{Brunton2016-gc}
\begin{barticle}
\bauthor{\bsnm{Brunton}, \binits{S.L.}},
\bauthor{\bsnm{Proctor}, \binits{J.L.}},
\bauthor{\bsnm{Kutz}, \binits{J.N.}}:
\batitle{Discovering governing equations from data by sparse identification of
  nonlinear dynamical systems}.
\bjtitle{Proc. Natl. Acad. Sci. U. S. A.}
\bvolume{113}(\bissue{15}),
\bfpage{3932}--\blpage{3937}
(\byear{2016})
\end{barticle}
\endbibitem

\bibitem[\protect\citeauthoryear{Schmid}{2010}]{Schmid2010-lj}
\begin{barticle}
\bauthor{\bsnm{Schmid}, \binits{P.J.}}:
\batitle{Dynamic mode decomposition of numerical and experimental data}.
\bjtitle{J. Fluid Mech.}
\bvolume{656},
\bfpage{5}--\blpage{28}
(\byear{2010})
\end{barticle}
\endbibitem

\bibitem[\protect\citeauthoryear{Williams et~al.}{2015}]{Williams2015-jg}
\begin{barticle}
\bauthor{\bsnm{Williams}, \binits{M.O.}},
\bauthor{\bsnm{Kevrekidis}, \binits{I.G.}},
\bauthor{\bsnm{Rowley}, \binits{C.W.}}:
\batitle{A data–driven approximation of the koopman operator: Extending
  dynamic mode decomposition}.
\bjtitle{J. Nonlinear Sci.}
\bvolume{25}(\bissue{6}),
\bfpage{1307}--\blpage{1346}
(\byear{2015})
\end{barticle}
\endbibitem

\bibitem[\protect\citeauthoryear{Jaeger}{2007}]{Jaeger2007-py}
\begin{barticle}
\bauthor{\bsnm{Jaeger}, \binits{H.}}:
\batitle{Echo state network}.
\bjtitle{Scholarpedia}
\bvolume{2}(\bissue{9}),
\bfpage{2330}
(\byear{2007})
\end{barticle}
\endbibitem

\bibitem[\protect\citeauthoryear{Lukoševičius and
  Jaeger}{2009}]{Lukosevicius2009-vt}
\begin{barticle}
\bauthor{\bsnm{Lukoševičius}, \binits{M.}},
\bauthor{\bsnm{Jaeger}, \binits{H.}}:
\batitle{Reservoir computing approaches to recurrent neural network training}.
\bjtitle{Comput. Sci. Rev.}
\bvolume{3}(\bissue{3}),
\bfpage{127}--\blpage{149}
(\byear{2009})
\end{barticle}
\endbibitem

\bibitem[\protect\citeauthoryear{Shimada and Nagashima}{1979}]{Shimada1979-mv}
\begin{barticle}
\bauthor{\bsnm{Shimada}, \binits{I.}},
\bauthor{\bsnm{Nagashima}, \binits{T.}}:
\batitle{A numerical approach to ergodic problem of dissipative dynamical
  systems}.
\bjtitle{Prog. Theor. Phys.}
\bvolume{61}(\bissue{6}),
\bfpage{1605}--\blpage{1616}
(\byear{1979})
\end{barticle}
\endbibitem

\bibitem[\protect\citeauthoryear{Benettin et~al.}{1980}]{Benettin1980-tk}
\begin{barticle}
\bauthor{\bsnm{Benettin}, \binits{G.}},
\bauthor{\bsnm{Galgani}, \binits{L.}},
\bauthor{\bsnm{Giorgilli}, \binits{A.}},
\bauthor{\bsnm{Strelcyn}, \binits{J.-M.}}:
\batitle{Lyapunov characteristic exponents for smooth dynamical systems and for
  hamiltonian systems; a method for computing all of them. part 2: Numerical
  application}.
\bjtitle{Meccanica}
\bvolume{15}(\bissue{1}),
\bfpage{21}--\blpage{30}
(\byear{1980})
\end{barticle}
\endbibitem

\bibitem[\protect\citeauthoryear{Fujisaka and Yamada}{1983}]{Fujisaka1983-hu}
\begin{barticle}
\bauthor{\bsnm{Fujisaka}, \binits{H.}},
\bauthor{\bsnm{Yamada}, \binits{T.}}:
\batitle{Stability theory of synchronized motion in coupled-oscillator
  systems:}.
\bjtitle{Prog Theor Phys}
\bvolume{69}(\bissue{1}),
\bfpage{32}--\blpage{47}
(\byear{1983})
\end{barticle}
\endbibitem

\bibitem[\protect\citeauthoryear{Rulkov et~al.}{1995}]{Rulkov1995-tq}
\begin{barticle}
\bauthor{\bsnm{Rulkov}, \binits{N.F.}},
\bauthor{\bsnm{Sushchik}, \binits{M.M.}},
\bauthor{\bsnm{Tsimring}, \binits{L.S.}},
\bauthor{\bsnm{Abarbanel}, \binits{H.D.}}:
\batitle{Generalized synchronization of chaos in directionally coupled chaotic
  systems}.
\bjtitle{Phys. Rev. E Stat. Phys. Plasmas Fluids Relat. Interdiscip. Topics}
\bvolume{51}(\bissue{2}),
\bfpage{980}--\blpage{994}
(\byear{1995})
\end{barticle}
\endbibitem

\bibitem[\protect\citeauthoryear{Kocarev and Parlitz}{1996}]{Kocarev1996-mt}
\begin{barticle}
\bauthor{\bsnm{Kocarev}, \binits{L.}},
\bauthor{\bsnm{Parlitz}, \binits{U.}}:
\batitle{Generalized synchronization, predictability, and equivalence of
  unidirectionally coupled dynamical systems}.
\bjtitle{Phys. Rev. Lett.}
\bvolume{76}(\bissue{11}),
\bfpage{1816}--\blpage{1819}
(\byear{1996})
\end{barticle}
\endbibitem

\bibitem[\protect\citeauthoryear{Hunt et~al.}{1997}]{Hunt1997-aw}
\begin{barticle}
\bauthor{\bsnm{Hunt}, \binits{B.R.}},
\bauthor{\bsnm{Ott}, \binits{E.}},
\bauthor{\bsnm{Yorke}, \binits{J.A.}}:
\batitle{Differentiable generalized synchronization of chaos}.
\bjtitle{Phys. Rev. E Stat. Phys. Plasmas Fluids Relat. Interdiscip. Topics}
\bvolume{55}(\bissue{4}),
\bfpage{4029}--\blpage{4034}
(\byear{1997})
\end{barticle}
\endbibitem

\bibitem[\protect\citeauthoryear{Kuptsov and Parlitz}{2012}]{Kuptsov2012-jj}
\begin{barticle}
\bauthor{\bsnm{Kuptsov}, \binits{P.V.}},
\bauthor{\bsnm{Parlitz}, \binits{U.}}:
\batitle{Theory and computation of covariant lyapunov vectors}.
\bjtitle{J. Nonlinear Sci.}
\bvolume{22}(\bissue{5}),
\bfpage{727}--\blpage{762}
(\byear{2012})
\end{barticle}
\endbibitem

\bibitem[\protect\citeauthoryear{Büsing et~al.}{2010}]{Busing2010-mq}
\begin{barticle}
\bauthor{\bsnm{Büsing}, \binits{L.}},
\bauthor{\bsnm{Schrauwen}, \binits{B.}},
\bauthor{\bsnm{Legenstein}, \binits{R.}}:
\batitle{Connectivity, dynamics, and memory in reservoir computing with binary
  and analog neurons}.
\bjtitle{Neural Comput.}
\bvolume{22}(\bissue{5}),
\bfpage{1272}--\blpage{1311}
(\byear{2010})
\end{barticle}
\endbibitem

\bibitem[\protect\citeauthoryear{Jaeger et~al.}{2007}]{Jaeger2007-kc}
\begin{barticle}
\bauthor{\bsnm{Jaeger}, \binits{H.}},
\bauthor{\bsnm{Lukosevicius}, \binits{M.}},
\bauthor{\bsnm{Popovici}, \binits{D.}},
\bauthor{\bsnm{Siewert}, \binits{U.}}:
\batitle{Optimization and applications of echo state networks with
  leaky-integrator neurons}.
\bjtitle{Neural Netw.}
\bvolume{20}(\bissue{3}),
\bfpage{335}--\blpage{352}
(\byear{2007})
\end{barticle}
\endbibitem

\end{thebibliography}
\end{document}